\documentclass[aps,prl,twocolumn,superscriptaddress,nofootinbib,longbibliography]{revtex4-2}

\usepackage[T1]{fontenc}
\usepackage[utf8]{inputenc}
\usepackage{amsmath,amssymb,amsthm,mathtools,bm}
\usepackage{hyperref}
\usepackage{xcolor}
\usepackage{graphicx}

\hypersetup{colorlinks=true,linkcolor=blue!50!black,citecolor=blue!50!black,urlcolor=blue!50!black}

\newtheorem{theorem}{Theorem}
\newtheorem{proposition}{Proposition}

\newcommand{\Tr}{\operatorname{Tr}}
\newcommand{\sgn}{\operatorname{sgn}}
\newcommand{\atanh}{\operatorname{atanh}}
\newcommand{\R}{\mathbb{R}}

\newcommand{\op}{\mathrm{op}}
\newcommand{\calK}{\mathcal{K}}

\begin{document}

\title{No Free Compression in Quantum Relaxations for Optimization}
% Alternative title: Limits of Compression in Quantum Relaxations for Optimization

\author{Stuart Hadfield}
\affiliation{USRA Research Institute for Advanced Computer Science, Moffett Field, California 94035, USA}
\email{shadfield@usra.edu}
\thanks{ORCID: \href{https://orcid.org/0000-0002-4607-3921}{0000-0002-4607-3921}}

\date{August 23, 2026}

\begin{abstract}
Qubit-efficient quantum relaxations compress classical decision variables into expectation values on substantially fewer qubits.  We ask what resource tradeoffs this compression entails for quantum optimization.  For the complete quadratic-Majorana encoding on $n$ qubits, pairwise correlators can represent $m=\Theta(n^2)$ binary variables.  We define the universal margin as the smallest correlator magnitude that can be guaranteed with prescribed signs for every target sign assignment.  We show that it is exactly $\Delta_{\rm Maj}(n)=\tan\!\left(\frac{\pi}{4n}\right)=\Theta(1/n)$, whereas uniformly random sign assignments retain $\Theta(1/\sqrt n)$ target-specific margins.  The stronger $1/n$ worst-case scaling is Majorana-specific.  Moreover, arbitrary density operators and fermionic Gaussian states generate the same quadratic-Majorana covariance body, so non-Gaussian state resources cannot enlarge this two-point relaxation.  Beyond Majoranas, standard quantum random access code bounds provide general information-theoretic baselines.  For any fixed family of $m$ designated binary observables on $n$ qubits, the universal margin is at most $\sqrt{2\ln2\,n/m}$, while arbitrary random access decoding from $N$ copies with constant success probability above $1/2$ requires $nN=\Omega(m)$.  For a fixed Pauli correlation encoding required to work uniformly over all targets, maintaining a fixed nonzero decoded magnitude under smooth sign decoding therefore requires a rescaling parameter that grows as the available margin shrinks.  Thus, while providing substantial qubit savings, compression can shift cost into restricted expectation value geometry, smaller expectation value magnitudes, or more demanding information recovery rather than eliminate it.
\end{abstract}

\maketitle

\emph{Introduction.}
Quantum optimization spans exact, approximate, and heuristic approaches to numerous challenging application problems~\cite{Abbas2024QuantumOptimization}.  For a given algorithm, the problem formulation determines how an instance is represented and therefore how many qubits are required, while at the same time available quantum hardware imposes a strict independent limit on the number of usable qubits.  This mismatch is especially problematic for reaching utility-scale applications with conventional binary encodings used in approaches such as the quantum approximate optimization algorithm (QAOA)~\cite{Farhi2014QAOA,Hadfield2019QAOA} and quantum annealing~\cite{KadowakiNishimori1998QA,Farhi2001Adiabatic}, where each decision variable is commonly assigned to its own qubit, so that a problem with $m$ variables requires $m$ or more qubits.  Hybrid decomposition and iterative problem reduction methods attempt to mitigate this constraint by instead solving smaller subproblems with the quantum device~\cite{Ponce2025Decomposition,BradyHadfield2024IQA}.  Alternatively, recently proposed compressed encodings offer a complementary strategy by changing the representation itself so that classical decision variables are represented within a much smaller quantum register, significantly reducing the qubit requirement at fixed problem size.  Indeed, the prospect of tackling application-scale optimization problems well beyond the reach of direct qubit-per-variable encodings on foreseeable qubit-limited devices has motivated substantial interest in qubit-efficient encodings.  In particular, recent work has explored quantum relaxations based on expectation values~\cite{Tan2021QubitEfficient,Fuller2024QuantumRelaxations,Patti2022MBE,SundarDupont2026QubitEfficient,RaymondButsPistoia2025FewQubits}, though their advantages and tradeoffs are not yet well understood.  Likewise, standard discrete encodings already exhibit tradeoffs between qubit requirements and circuit complexity~\cite{Sawaya2023EncodingTradeoffs}, while expectation-value relaxations seek substantially stronger reductions in qubit register width.

Quantum random-access optimization (QRAO) realizes constant-factor compression using quantum random access codes (QRACs)~\cite{Fuller2024QuantumRelaxations}.  Standard $(2,1)$ and $(3,1)$ encodings place two or three bits in one qubit with recovery biases $1/\sqrt2$ and $1/\sqrt3$~\cite{Ambainis1999DenseCoding,Hayashi2006QRAC}.  Teramoto \emph{et al.} quantified an algorithmic compression-performance tradeoff for MaxCut, with proven approximation-ratio bounds of $0.625$ at $2\times$ compression and $0.555$ at $3\times$~\cite{Teramoto2023QRAO}.  Whether small expectation value margins alone constrain approximation ratios remains open.  Our bounds instead concern representational geometry and information recovery.  Other work studies entanglement and state resources, recursion, constraints, hardware noise, and quantum-to-classical decoding~\cite{TeramotoRaymond2023Entanglement,Kondo2025RecursiveQRAO,SharmaRaymond2024Knapsack,Tamura2024NoiseQRAO,He2025NonvariationalQRAO,RaymondButsPistoia2025FewQubits,Suzuki2026DecoderConsistent}.  Decoder-consistent relaxations make readout part of the relaxation by defining a Hamiltonian whose expectation equals the expected decoded objective~\cite{Suzuki2026DecoderConsistent}, underscoring that compression and readout form one physical protocol.

Pauli correlation encodings (PCE) pursue more aggressive compression.  Throughout, $m$ is the number of logical binary variables and $n$ the number of qubits.  In the expectation value encodings considered here, each variable has one designated observable.  For a density operator $\rho$ and observable $O$, write $\langle O\rangle_\rho=\Tr(\rho O)$.  PCE assigns each $x_i\in\{\pm1\}$ to a Pauli-string observable $\Pi_i$ and decodes
\begin{equation}
    x_i=\sgn\langle \Pi_i\rangle_\rho,\qquad i=1,\ldots,m.
    \label{eq:pce}
\end{equation}
Fixed-weight $k$-body correlations can therefore encode $m=\Theta(n^k)$ variables on $n$ qubits, corresponding in principle to width $n=\Theta(m^{1/k})$~\cite{Sciorilli2025PCE}.  This parametrically stronger compression is attractive because it could bring much larger logical optimization problems within a fixed qubit budget, and it has motivated applications to the low-autocorrelation binary sequence (LABS) problem, the traveling-salesman problem, portfolio optimization, and unit commitment~\cite{Sciorilli2025LABS,doCarmo2026WarmPCE,Soloviev2026Portfolio,Nguyen2026UnitCommitment}.  Recent methodological work also shows that practical usefulness depends on more than width alone, including constraint handling and decoding, finite-shot effects and hardware noise, correlator resolution and binarization, and behavior in direct hardware comparisons with QRAO~\cite{Padin2026ConstrainedPCE,Friedhoff2026ProblemAwarePCE,Alonso2026Benchmark,Yoshioka2026LargeScalePCE,Sharma2026Hardware}.  Efficiently simulable PCE constructions based on free fermions and IQP circuits further show that large width savings and useful heuristic behavior do not by themselves imply quantum advantage~\cite{LizzioBosco2026SimulablePCE}.  The exact theorem below concerns the complete quadratic Majorana family generated by mutually anticommuting operators, not generic sparse or homogeneous fixed-weight Pauli families.  Without this structure, our results provide only the general information-theoretic bounds.

This motivates our central question of which physical or information-theoretic resources must grow as qubit count shrinks.  Our results reveal explicit resource tradeoffs, not that compression fails.  Geometrically, the chosen observables may probe only a restricted set of attainable expectation values, making additional state complexity invisible.  Statistically, robustly covering all $2^m$ sign patterns (bit strings) on few qubits can drive some expectation values toward zero.  Related capacity limits concern full high-dimensional continuous distributions~\cite{Barthe2025ExpectationSamplers}.  Discrete optimization often requires only robust signs.

Our results compare one exact obstruction with two increasingly general information-theoretic baselines.  The complete quadratic-Majorana family has an exactly solvable worst-case margin below the generic scale.  For any fixed family of designated binary observables, QRAC theory bounds the universal margin, while Nayak's bound~\cite{Nayak1999QRAC} constrains the product $nN$ for arbitrary random-access encodings.  The two general bounds are standard information-theoretic limits recast as baselines for compressed optimization.  The exact Majorana result is parametrically stronger than these generic baselines.  Thus its geometry is model-specific, while the baselines quantify unavoidable costs of broader compression.

For the complete quadratic-Majorana family underlying the free-fermionic construction, let $\Delta_{\rm Maj}(n)$ be the largest $\delta$ such that every target sign pattern can be represented with every prescribed correlator sign satisfied and every magnitude at least $\delta$.  We prove
\begin{equation}
    \Delta_{\rm Maj}(n)=\tan\!\left(\frac{\pi}{4n}\right)
    \sim \frac{\pi}{4n}.
    \label{eq:headline-margin}
\end{equation}
For quantum optimization, Eq.~\eqref{eq:headline-margin} says that representing $\Theta(n^2)$ binary variables on $n$ qubits forces the worst-case margin down to $\Theta(1/n)$.  Independent coordinatewise decoding must then somehow resolve or amplify that margin, while structured or global decoders must draw on additional information.

The minimizing patterns are exactly those equivalent, under switching (reversing every edge between a vertex subset and its complement) and relabeling, to the transitive tournament, the acyclic orientation in which every edge follows a total vertex ordering.  For $m=\Theta(n^2)$, this $\Theta(n^{-1})$ worst-case margin is parametrically below the generic $O(n^{-1/2})$ scale, while typical sign patterns attain $\Theta(n^{-1/2})$.  Deng \emph{et al.} proved the sharp tournament skew-spectral extremum~\cite{Deng2018Tournament}.  Using this extremum, we connect Majorana sign realizability to tournament theory and construct a matching rank-two certificate.  This yields the exact margin, its compression consequences, and the separation between worst-case and typical sign patterns.  Figure~\ref{fig:summary} summarizes this exact obstruction and the two general baselines.  Proofs and auxiliary results are provided in the Supplemental Material~\cite{SupplementalMaterial}.

\begin{figure*}[t]
\centering
\includegraphics[width=0.94\textwidth]{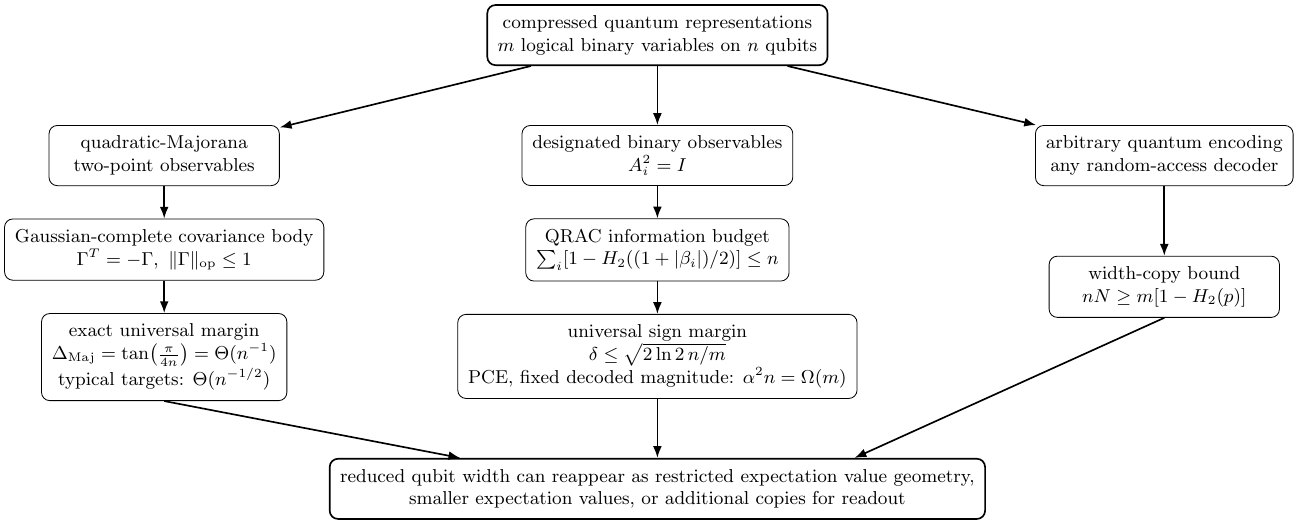}
\caption{Three levels of compression limits.  \emph{Majorana-specific.} The exact result gives a $\Theta(n^{-1})$ universal worst-case margin despite $\Theta(n^{-1/2})$ typical behavior, together with a Gaussian-complete set of attainable two-point correlations (the covariance body).  \emph{Designated observables.} Standard QRAC information theory bounds the universal margin for fixed binary observables.  \emph{Arbitrary random-access codes.} Nayak's bound converts width reduction into a copy cost for reliable recovery.}
\label{fig:summary}
\end{figure*}

\emph{Gaussian completeness.}
Let $\gamma_1,\ldots,\gamma_{2n}$ be Majorana operators satisfying $\{\gamma_a,\gamma_b\}=2\delta_{ab}I$, where $I$ is the identity operator.  For a density operator $\rho$, define the real antisymmetric covariance matrix
\begin{equation}
    \Gamma_{ab}(\rho)=i\Tr(\rho\gamma_a\gamma_b)\quad(a\ne b),
    \qquad \Gamma_{aa}=0.
    \label{eq:cov}
\end{equation}
The set of physical covariance matrices is the spectrahedron~\cite{Bravyi2005Lagrangian,Wan2023MatchgateShadows}
\begin{align}
    \calK_n
    &=\{\Gamma\in\R^{2n\times2n}:\Gamma^T=-\Gamma,\ \|\Gamma\|_{\op}\le1\}\notag\\
    &=\{\Gamma:\Gamma^T=-\Gamma,\ I+i\Gamma\succeq0\}.
    \label{eq:Kn}
\end{align}
Consider any finite family of $r$ Hermitian quadratic Majorana observables
\begin{equation}
    B_\ell=b_\ell I+\sum_{a<b}c_{\ell,ab}\,i\gamma_a\gamma_b,
    \qquad \ell=1,\ldots,r,
    \label{eq:quadratic-features}
\end{equation}
with real $b_\ell,c_{\ell,ab}$.  The $2n$ Majorana generators furnish an irreducible $n$-qubit representation of the complex Clifford algebra, unique up to unitary equivalence but with nonunique Pauli realizations~\cite{BravyiKitaev2002,Bravyi2005Lagrangian}.  The Jordan-Wigner transform maps $i\gamma_a\gamma_b$ to Pauli strings, while Bravyi-Kitaev gives a different realization with different Pauli-weight and locality properties~\cite{Seeley2012BravyiKitaev}.  Fermion-to-qubit mappings more generally trade Pauli weight against locality~\cite{Jiang2020Ternary}.  These representation-level costs are distinct from the expectation value geometry studied here.  The covariance body and margins depend only on Majorana anticommutation, while circuit and measurement costs can depend on the mapping.  The free-fermionic PCE construction uses precisely these quadratic covariance observables~\cite{LizzioBosco2026SimulablePCE}.  Related fermionic embeddings have also been used to construct quantum relaxations of noncommutative quadratic programs with quantum-to-classical rounding~\cite{ZhaoRubin2024FermionicEmbeddings}.  Fermionic Gaussian (free-fermion) states are specified by their covariance matrices, with higher moments fixed by Wick's theorem~\cite{Bravyi2005Lagrangian}.

\begin{proposition}[Gaussian completeness]
\label{thm:gaussian}
For every density operator $\rho$ there exists a possibly mixed fermionic Gaussian state $\sigma$ satisfying $\Tr(\rho B_\ell)=\Tr(\sigma B_\ell)$ for all $\ell$.  Equivalently, arbitrary and Gaussian states generate the same affine image of $\calK_n$ under any quadratic-observable map.
\end{proposition}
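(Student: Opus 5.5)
The plan is to reduce the statement to the fact that the expectation of any quadratic observable depends on $\rho$ only through $\Gamma(\rho)$, and that every $\Gamma\in\calK_n$ is the covariance matrix of some (possibly mixed) Gaussian state. By linearity, $\Tr(\rho B_\ell)=b_\ell+\sum_{a<b}c_{\ell,ab}\Tr(\rho\, i\gamma_a\gamma_b)=b_\ell+\sum_{a<b}c_{\ell,ab}\Gamma_{ab}(\rho)$, using $\Tr\rho=1$ and the definition~\eqref{eq:cov}. So the map $\rho\mapsto(\Tr\rho B_\ell)_\ell$ factors as $\rho\mapsto\Gamma(\rho)$ followed by the affine map $A:\Gamma\mapsto(b_\ell+\sum_{a<b}c_{\ell,ab}\Gamma_{ab})_\ell$. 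It therefore suffices to show two things. First, $\Gamma(\rho)\in\calK_n$ for every density operator. Second, every $\Gamma\in\calK_n$ equals $\Gamma(\sigma)$ for some Gaussian $\sigma$. Together these give the claim, and also the "equivalently" clause: both state classes have image exactly $A(\calK_n)$.

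For the first point, take any real unit vectors $u,v\in\R^{2n}$ and set $\gamma(u)=\sum_a u_a\gamma_a$, which satisfies $\gamma(u)^2=I$. Antisymmetry of $\Gamma$ follows from $\gamma_a\gamma_b=-\gamma_b\gamma_a$ for $a\neq b$, and reality from Hermiticity of $i\gamma_a\gamma_b$. For the norm bound it is cleanest to use the second form in~\eqref{eq:Kn}. For a complex vector $z\in\mathbb{C}^{2n}$ let $\gamma(z)=\sum_a z_a\gamma_a$. Then
\[
0\le\Tr\bigl(\rho\,\gamma(z)^\dagger\gamma(z)\bigr)=\sum_{a,b}\bar z_a z_b\Tr(\rho\gamma_a\gamma_b)=\|z\|^2-i\,\bar z^T\Gamma z,
\]
since $\Tr(\rho\gamma_a\gamma_b)=\delta_{ab}-i\Gamma_{ab}$. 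This says exactly $I+i\Gamma\succeq0$. The equivalence with $\|\Gamma\|_{\op}\le1$ comes from the fact that $i\Gamma$ is Hermitian with spectrum symmetric about zero. This step is routine.

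For the second point, I would bring $\Gamma$ to real canonical form. Since $\Gamma$ is real antisymmetric, there is $O\in SO(2n)$ with $O\Gamma O^T=\bigoplus_{j=1}^n\begin{pmatrix}0&\lambda_j\\-\lambda_j&0\end{pmatrix}$ and $\lambda_j\in[-1,1]$ by the norm bound. (If $O$ has determinant $-1$, flip the sign of one $\lambda_j$ instead.) Define rotated Majoranas $\tilde\gamma_a=\sum_b O_{ab}\gamma_b$; these again satisfy the Clifford relations. Set
\[
\sigma=2^{-n}\prod_{j=1}^n\bigl(I-i\lambda_j\tilde\gamma_{2j-1}\tilde\gamma_{2j}\bigr).
\]
The factors commute, and each factor has eigenvalues $1\pm\lambda_j\ge0$. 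So $\sigma\succeq0$, and $\Tr\sigma=1$ because every nontrivial Majorana monomial is traceless. The state $\sigma$ is Gaussian: it is a product of single-mode thermal-type states in the rotated modes, or equivalently the limit of $e^{-H}/Z$ with $H$ quadratic, which covers the boundary cases $|\lambda_j|=1$. I would then compute $\Tr(\sigma\, i\tilde\gamma_{2j-1}\tilde\gamma_{2j})=\lambda_j$, with all other rotated two-point functions vanishing by the same tracelessness argument. This gives $\tilde\Gamma(\sigma)=O\Gamma O^T$. Undoing the rotation, using that $\Gamma$ transforms covariantly as $\Gamma\mapsto O\Gamma O^T$ under $\gamma\mapsto O\gamma$, yields $\Gamma(\sigma)=\Gamma$.

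The main obstacle is purely bookkeeping. It lies in fixing the sign convention in~\eqref{eq:cov} so that the chosen product state reproduces $+\lambda_j$ rather than $-\lambda_j$, in handling $\det O=-1$, and in justifying that the rotated operators $\tilde\gamma_a$ are implemented by a Gaussian unitary. That last point is standard: every $SO(2n)$ rotation is generated by quadratic Hamiltonians. Finally, setting $\sigma$ from $\Gamma(\rho)$ gives $\Tr(\rho B_\ell)=A(\Gamma(\rho))_\ell=A(\Gamma(\sigma))_\ell=\Tr(\sigma B_\ell)$ for all $\ell$ simultaneously.
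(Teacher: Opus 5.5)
Your argument follows the paper's route and is correct in substance. You factor each $\Tr(\rho B_\ell)$ affinely through $\Gamma(\rho)$, show $\Gamma(\rho)\in\calK_n$, and realize any $\Gamma\in\calK_n$ by a product of single-mode states in a canonical rotated Majorana basis.

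One sign must be fixed. With $\Gamma_{ab}=i\Tr(\rho\gamma_a\gamma_b)$, your state $2^{-n}\prod_j\bigl(I-i\lambda_j\tilde\gamma_{2j-1}\tilde\gamma_{2j}\bigr)$ gives $\Tr(\sigma\,i\tilde\gamma_{2j-1}\tilde\gamma_{2j})=-\lambda_j$, and hence $\Gamma(\sigma)=-\Gamma$. Take $\sigma=2^{-n}\prod_j\bigl(I+i\lambda_j\tilde\gamma_{2j-1}\tilde\gamma_{2j}\bigr)$ instead, which is exactly the paper's product of the $\rho_j$.

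The remaining differences from the paper are minor:
\begin{itemize}
\item For necessity, you use positivity of $\Tr(\rho\,\gamma(z)^\dagger\gamma(z))$ where the paper uses the canonical form together with $(i\tilde\gamma_{2j-1}\tilde\gamma_{2j})^2=I$. Your identity literally yields $I-i\Gamma\succeq0$ rather than $I+i\Gamma\succeq0$, but the two are equivalent by the spectral symmetry you cite.
\item You write $\sigma$ as a product of commuting factors quadratic in the original $\gamma_a$. It is therefore directly of the form $e^{-H}/\Tr e^{-H}$ with $H$ quadratic (or a limit of such states when some $|\lambda_j|=1$), for any orthogonal $O$. So you do not need the Gaussian-unitary or $\det O$ bookkeeping you list as obstacles; the paper needs it only because it builds a tensor product over rotated modes.
\end{itemize}
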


This is standard covariance geometry~\cite{Bravyi2005Lagrangian,DiVincenzoTerhal2005FLO,Wan2023MatchgateShadows}.  Quadratic Majorana observables are fermionic two-point correlators, and a mixed Gaussian state can reproduce every physical covariance matrix.  If the objective and decoder depend only on Eq.~\eqref{eq:quadratic-features}, interactions may generate entanglement and non-Gaussianity without creating a new quadratic expectation value vector.  Consequently, objectives linear in these values can be optimized over $\calK_n$ by semidefinite programming.  Nonlinear objectives or decoders, restricted preparation families, and higher-order observables fall outside this statement.  The conclusion is about attainable expectation values, not a blanket efficient optimization claim.  Quartic and higher correlators can carry non-Gaussian information not fixed by the covariance matrix~\cite{Bravyi2019ManyBody,HastingsODonnell2022}.

\emph{Exact margin of the complete Majorana relaxation.}
Take all
\begin{equation}
    A_{ab}=i\gamma_a\gamma_b,\qquad 1\le a<b\le2n,
\end{equation}
so $m=\binom{2n}{2}=n(2n-1)$.  For a target sign pattern $x=(x_{ab})$, define
\begin{equation}
    \delta_x=\max_\rho\min_{a<b}x_{ab}\Tr(\rho A_{ab}),
    \qquad
    \Delta_{\rm Maj}(n)=\min_x\delta_x,
    \label{eq:majorana-margin-def}
\end{equation}
where the maximization is over all density operators.

\begin{theorem}[Exact Majorana margin]
\label{thm:exact-majorana}
For the complete quadratic-Majorana family,
\begin{equation}
    \Delta_{\rm Maj}(n)=\tan\!\left(\frac{\pi}{4n}\right).
    \label{eq:exact-majorana}
\end{equation}
The minimizing sign patterns are exactly those equivalent, under relabeling and switching, to the transitive tournament, the acyclic orientation of a complete graph in which every edge follows a total vertex ordering.  Hence $\Delta_{\rm Maj}(n)=\Theta(n^{-1})=\Theta(m^{-1/2})$.
\end{theorem}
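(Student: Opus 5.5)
By Proposition~\ref{thm:gaussian}, the maximization over density operators in Eq.~\eqref{eq:majorana-margin-def} may be replaced by a maximization over $\calK_n$, since $\Tr(\rho A_{ab})=\Gamma_{ab}(\rho)$ up to the sign convention in Eq.~\eqref{eq:cov}. A sign pattern $x$ is then encoded as a skew-symmetric $\pm1$ matrix $S$ on $N=2n$ vertices, with $S_{ab}=x_{ab}$ for $a<b$. This is exactly the skew-adjacency matrix of a tournament. The problem becomes
\[
\delta_x=\max\{t:\ \Gamma^T=-\Gamma,\ \|\Gamma\|_{\op}\le1,\ S_{ab}\Gamma_{ab}\ge t\ \forall a\ne b\}.
\]
Conjugating $\Gamma$ by a signed permutation matrix preserves $\calK_n$. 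On $S$ it acts precisely by relabeling and switching. Hence $\delta_x$ is invariant under these operations, and the statement about minimizers is well posed.

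\emph{Lower bound (achievability).} For every tournament I take $\Gamma=S/\|S\|_{\op}$. This is feasible, and every signed entry equals $1/\|S\|_{\op}$, so $\delta_x\ge 1/\|S\|_{\op}$. I then invoke the tournament skew-spectral extremum of Deng \emph{et al.}~\cite{Deng2018Tournament}. It states that the skew spectral radius of any $N$-vertex tournament is at most $\cot(\pi/(2N))$, with equality exactly for tournaments switching-equivalent to the transitive one. This gives $\delta_x\ge\tan(\pi/(4n))$ for all $x$, with strict inequality for every non-equivalent $x$. The strictness holds because $\Gamma=S/\|S\|_{\op}$ already exceeds $\tan(\pi/(4n))$ in that case. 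The strict inequality settles the characterization of minimizers, provided the upper bound below holds for the transitive pattern. As a consistency check, the spectrum of the transitive $S_T$ (with $S_{ab}=\sgn(b-a)$) is $\{i\cot((2k-1)\pi/(2N))\}$, whose largest modulus is $\cot(\pi/(2N))$.

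\emph{Upper bound (dual certificate) for the transitive tournament.} For any entrywise nonnegative symmetric weights $W$, duality of the operator and nuclear norms gives
\[
t\sum_{a\ne b}W_{ab}\le\sum_{a\ne b}W_{ab}S_{ab}\Gamma_{ab}=\langle S\circ W,\Gamma\rangle\le\|S\circ W\|_*.
\]
I will choose the rank-two skew matrix $M_{ab}=\sin(\pi(b-a)/N)$. This equals $\operatorname{Im}(\bar u_a u_b)$ with $u_a=e^{i\pi a/N}$. For $a<b$ we have $b-a\in\{1,\dots,N-1\}$, so $M_{ab}>0$. Thus $M=S_T\circ W$ with $W\ge0$, and $M$ is the rank-two certificate. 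Its nonzero eigenvalues are $\pm iN/2$, so $\|M\|_*=N$. The total weight is $2\sum_{d=1}^{N-1}(N-d)\sin(\pi d/N)$. I expect to verify by a routine geometric-sum computation that this equals $N\cot(\pi/(2N))$. These give $\delta_{x_T}\le\tan(\pi/(2N))=\tan(\pi/(4n))$, which matches the lower bound.

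\emph{Main obstacle.} The trigonometric sum identity is routine. The delicate part is the spectral input: the sharp upper bound on the tournament skew spectral radius together with its equality characterization up to switching, which I take directly from~\cite{Deng2018Tournament}. Finally, $\tan(\pi/(4n))\sim\pi/(4n)$ and $m=n(2n-1)$ give $\Theta(n^{-1})=\Theta(m^{-1/2})$.
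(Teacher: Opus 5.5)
Your proof is correct and is essentially the paper's argument. It uses the same ingredients: the normalized tournament ray $S/\|S\|_{\op}$ together with the Deng \emph{et al.}\ extremum for the universal lower bound and the strict inequality off the transitive class, invariance of $\calK_n$ under signed permutations for the whole switching/relabeling class, and the same rank-two sinusoidal certificate for the transitive upper bound. Your duality step $\langle M,\Gamma\rangle\le\|M\|_*=N$, with total weight $N\cot(\pi/(2N))$ (your $N=2n$), is a rescaled form of the paper's computation $\lambda_{\max}(H_w)=\tan(\pi/(4n))$, since the top eigenvalue of a quadratic Majorana Hamiltonian is half the nuclear norm of its coefficient matrix. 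The trigonometric identity you defer does hold: pairing $h\leftrightarrow N-h$ reduces it to $\tfrac N2\sum_{h=1}^{N-1}\sin(\pi h/N)=\tfrac N2\cot(\pi/(2N))$.
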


Set $d=2n$, the number of Majorana operators.  A target $x$ defines a tournament skew-adjacency matrix $T_x$ by $(T_x)_{ab}=x_{ab}$ for $a<b$ and $(T_x)_{ba}=-x_{ab}$.  The normalized tournament matrix
\begin{equation}
    \Gamma_x=\frac{T_x}{\|T_x\|_{\op}}
    \label{eq:cov-ray}
\end{equation}
is a valid covariance matrix and gives $\delta_x\ge1/\|T_x\|_{\op}$.  For a general target this construction need not be optimal.  Determining $\delta_x$ exactly for targets not equivalent to the transitive tournament under switching and relabeling remains open.  Deng \emph{et al.} proved
\begin{equation}
    \|T_x\|_{\op}\le \cot\!\left(\frac{\pi}{2d}\right),
    \label{eq:tournament-bound}
\end{equation}
with equality only for the transitive tournament up to switching and relabeling~\cite{Deng2018Tournament}.  For the transitive target, let $\theta=\pi/(2d)$ and choose positive normalized weights
\begin{equation}
    w_{ab}=\frac{2\tan\theta}{d}\sin\!\left(\frac{\pi(b-a)}{d}\right),
    \qquad \sum_{a<b}w_{ab}=1.
    \label{eq:certificate-weights}
\end{equation}
The corresponding skew coefficient matrix has rank two, with both nonzero singular values equal to $\tan\theta$.  The associated Hermitian quadratic Majorana Hamiltonian $H_w=\sum_{a<b}w_{ab}A_{ab}$ therefore has largest eigenvalue $\lambda_{\max}(H_w)=\tan\theta$.  Any state representing the transitive target with margin at least $\delta$, meaning $\Tr(\rho A_{ab})\ge\delta$ for every $a<b$, obeys $\delta\le\Tr(\rho H_w)\le\tan\theta$.  For this target, Eq.~\eqref{eq:cov-ray} and equality in Eq.~\eqref{eq:tournament-bound} give $\delta_x\ge1/\|T_{\rm tr}\|_{\op}=\tan\theta$, so the upper and lower bounds match and prove Eq.~\eqref{eq:exact-majorana}.  The optimal covariance $\Gamma_*=\tan\theta\,T_{\rm tr}$ has skew-block magnitudes $\tan\theta\,\cot\!\bigl((2j-1)\theta\bigr)$.  Thus one mode saturates the covariance bound, while the remaining fixed-index modes approach a $1/(2j-1)$ tail.

The worst-case behavior is not typical.  For every state,
\begin{equation}
    \sum_{a<b}\langle A_{ab}\rangle^2
    =\frac12\|\Gamma\|_F^2\le n,
    \label{eq:frobenius-envelope}
\end{equation}
so $\delta_x\le\sqrt{n/m}=1/\sqrt{2n-1}$ for every target.  For a uniformly random target, the tournament matrix has $\|T_x\|_{\op}=\Theta(\sqrt n)$ with high probability~\cite{Vershynin2018HDP}.  Equation~\eqref{eq:cov-ray} then gives the matching $\Omega(n^{-1/2})$ lower bound.  Hence $\delta_x=\Theta(n^{-1/2})$ typically, while the transitive switching class forces the universal $\Theta(n^{-1})$ bottleneck.  The theorem therefore identifies an exceptional but unavoidable worst-case family, while typical targets need not exhibit the stronger decay.  By comparison, the original PCE construction gives a general sufficient guarantee that every bit string is representable with correlator magnitudes $\Theta(1/m)$~\cite{Sciorilli2025PCE}.  That is a sufficiency construction, not an optimal-margin theorem.  The complete quadratic-Majorana family has exact worst-case margin $\Theta(m^{-1/2})$.  Whether other fixed $m=\Theta(n^2)$ families can approach the information-theoretic $O(n^{-1/2})=O(m^{-1/4})$ universal-margin envelope remains open.

Tournament matrices also enter the joint-measurement theory of the same Majorana family~\cite{McNulty2025FermionicJoint}.  There the relevant extremal quantity is the \emph{sum} of tournament singular values, whereas the state-space construction above is governed by the \emph{largest} singular value.  The transitive tournament is therefore margin-worst but minimizes skew energy, rather than maximizing the joint-measurement functional.  The Supplemental Material makes this distinction explicit.

\emph{General information limits beyond Majoranas.}
The exact result above exploits Majorana operator geometry.  We now remove that structure entirely.  Let $\mathbf A=(A_1,\ldots,A_m)$ be Hermitian binary observables on $n$ qubits, $A_i^2=I$, and let $x\mapsto\rho_x$ be any encoding of $x\in\{\pm1\}^m$.  Applying the standard entropic QRAC argument coordinate by coordinate~\cite{Nayak1999QRAC,Ambainis1999DenseCoding}, define for uniformly random $X=(X_1,\ldots,X_m)$ the average aligned bias
\begin{equation}
    \beta_i=2^{-m}\sum_x x_i\Tr(\rho_xA_i),
    \label{eq:beta}
\end{equation}
and let $H_2(q)=-q\log_2 q-(1-q)\log_2(1-q)$ denote the binary entropy.

\begin{proposition}[Coordinate information budget]
\label{thm:bias-budget}
Every such encoding satisfies
\begin{equation}
    \sum_{i=1}^{m}\left[1-H_2\!\left(\frac{1+|\beta_i|}{2}\right)\right]\le n,
    \label{eq:bias-budget}
\end{equation}
and therefore $\sum_i\beta_i^2\le2\ln2\,n$.
\end{proposition}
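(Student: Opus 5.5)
The plan is to run the standard Nayak-type entropic argument on the classical–quantum state
\[
\rho_{XQ}=2^{-m}\sum_x |x\rangle\langle x|\otimes\rho_x ,
\]
with $X$ uniform on $\{\pm1\}^m$ and $Q$ the $n$-qubit register. There are three steps.

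\emph{Step 1 (total information budget).} Holevo's bound, or directly $I(X;Q)\le S(Q)\le \log_2\dim Q$, gives
\[
I(X;Q)\le n .
\]

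\emph{Step 2 (split by coordinate).} Apply the chain rule $I(X;Q)=\sum_i I(X_i;Q\mid X_{<i})$. Because the $X_i$ are independent,
\[
I(X_i;Q\mid X_{<i})=I(X_i;QX_{<i})\ge I(X_i;Q),
\]
using $I(X_i;X_{<i})=0$ and monotonicity of mutual information under discarding a subsystem. Therefore
\[
\sum_i I(X_i;Q)\le n .
\]

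\emph{Step 3 (relate each term to $\beta_i$).} Measure the designated observable $A_i$ in its $\pm1$ eigenbasis and call the outcome $Y_i$. By data processing, $I(X_i;Q)\ge I(X_i;Y_i)$. The conditional distribution is $\Pr[Y_i=X_i\mid X_i=s]=\bigl(1+s\,\overline{\langle A_i\rangle}_s\bigr)/2$, where $\overline{\langle A_i\rangle}_s$ averages $\Tr(\rho_xA_i)$ over $x$ with $x_i=s$. The overall agreement probability is then $p_i=(1+\beta_i)/2$.

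Fano's inequality, or equivalently $H(X_i\mid Y_i)\le H(X_i\oplus Y_i)=H_2(p_i)$, gives
\[
I(X_i;Y_i)=1-H(X_i\mid Y_i)\ge 1-H_2(p_i).
\]
Since $H_2$ is symmetric about $1/2$, $H_2(p_i)=H_2\bigl((1+|\beta_i|)/2\bigr)$. Summing over $i$ yields Eq.~\eqref{eq:bias-budget}. The only care needed is to use the error-indicator bound rather than computing the channel's exact capacity, so that uniformity of $X_i$ is all that is required.

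For the corollary, I will use the elementary inequality
\[
1-H_2\!\left(\tfrac{1+t}{2}\right)\ge \frac{t^2}{2\ln 2},\qquad t\in[0,1].
\]
To prove it, expand
\[
1-H_2\!\left(\tfrac{1+t}{2}\right)=\frac{1}{2\ln2}\bigl[(1+t)\ln(1+t)+(1-t)\ln(1-t)\bigr]=\frac1{\ln2}\sum_{k\ge1}\frac{t^{2k}}{2k(2k-1)} .
\]
The $k=1$ term equals $t^2/(2\ln2)$ and every later term is nonnegative. Combining with Eq.~\eqref{eq:bias-budget} gives $\sum_i\beta_i^2\le 2\ln2\,n$.

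I expect the main obstacle to be the bookkeeping in Step 2: showing that independence of the $X_i$ converts the chain rule into the superadditive bound $\sum_i I(X_i;Q)\le I(X;Q)$, rather than something weaker. The remaining steps are standard data-processing and entropy estimates.
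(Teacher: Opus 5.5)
Your proposal is correct and follows the paper's proof essentially step for step: Holevo's bound $I(X;Q)\le S(Q)\le n$; the chain rule plus independence of the $X_i$ to get $\sum_i I(X_i;Q)\le n$; data processing through the measurement of $A_i$; and the Fano-type bound $I(X_i;Y_i)\ge 1-H_2(p_i)$. The only differences are cosmetic. You use the symmetry of $H_2$ where the paper relabels outcomes so that $p_i=(1+|\beta_i|)/2$, and you derive $1-H_2\bigl(\tfrac{1+t}{2}\bigr)\ge t^2/(2\ln 2)$ from the power series (valid on all of $[0,1]$, including $t=1$) rather than from Pinsker's inequality applied to $D_{\rm KL}\bigl(\mathrm{Ber}(\tfrac{1+t}{2})\,\|\,\mathrm{Ber}(\tfrac12)\bigr)$.
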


Proposition~\ref{thm:bias-budget} is the standard entropic QRAC bound specialized to designated binary observables.  Farkas \emph{et al.} give complementary dimension-based bounds on general QRAC average success~\cite{Farkas2025QRACBounds}.  If a universal margin $\delta$ exists, then $\beta_i\ge\delta$ for every $i$, so the geometric consequence below does not require these measurements actually to be used for decoding.  General multi-copy decoders enter only in Eq.~\eqref{eq:width-copies}.

For a fixed observable family $\mathbf A$, let
\begin{equation}
    \delta_x(\mathbf A)=\max_\rho\min_i x_i\Tr(\rho A_i),
    \qquad
    \Delta(\mathbf A)=\min_x\delta_x(\mathbf A),
    \label{eq:generic-margin}
\end{equation}
so $\delta_x$ is the margin for one target and $\Delta$ is the margin guaranteed uniformly over all targets.  With the probability simplex $\mathcal P_m=\{w\in\R_{\ge0}^m:\sum_i w_i=1\}$, Sion's minimax theorem gives~\cite{Sion1958Minimax}
\begin{equation}
    \delta_x(\mathbf A)=\min_{w\in\mathcal P_m}\lambda_{\max}\!\left(\sum_i w_i x_iA_i\right),
    \label{eq:minimax-main}
\end{equation}
an equivalent spectral characterization of the margin.  If $\Delta(\mathbf A)\ge\delta$, Proposition~\ref{thm:bias-budget} implies
\begin{equation}
    n\ge m\left[1-H_2\!\left(\frac{1+\delta}{2}\right)\right],
    \qquad
    \delta\le\sqrt{\frac{2\ln2\,n}{m}}.
    \label{eq:generic-bound}
\end{equation}
Hence no fixed $n=o(m)$ binary-observable family can realize all sign patterns with constant margin.  This is a compression constraint: when $m\le n$, commuting Pauli $Z$ observables and product states give universal margin one.  Equation~\eqref{eq:generic-bound} is entropic and need not be sharp.  Recent work gives exact $(r,r-1)$ constructions~\cite{Suzuki2026QRAC} and proves conjectured average-success bounds for $(r,r-1)$ and $(r,r-2)$ QRACs~\cite{TanJafar2026QRAC}, while classical-code embeddings disprove a stronger square-root conjecture for unrestricted worst-case QRACs~\cite{Liu2026QRAC}.

Universal signs also arise naturally in optimization.  For an Ising objective $C$ with unique optimum $x^\star$, the spin-reversal gauges $C_s(x)=C(s\odot x)$ have optima $s\odot x^\star$ spanning the hypercube while preserving the interaction graph, absolute couplings, and spectrum.  Any \emph{fixed} observable family robust across this gauge family therefore obeys Eq.~\eqref{eq:generic-bound}.  This provides a concrete optimization reason for the universal-sign premise.  Gauge-adaptive or other problem-dependent mappings lie outside the statement.  The worst-case guarantee is thus operational for fixed hardware or compiler mappings, oblivious instance streams, and other claimed uniform guarantees.  Instance-dependent remapping can avoid a bad alignment but cannot change $\Delta(\mathbf A)$ once the observable family is fixed.

\emph{Where compression cost reappears in quantum optimization.}
An effective decoder ultimately needs stable classical decisions, not merely nonzero expectation values.  PCE therefore employs nonlinear maps of expectation values~\cite{Sciorilli2025PCE,Patti2022MBE} to turn small correlators into near-binary outputs, such as
\begin{equation}
    z_i=\tanh(\alpha\langle A_i\rangle),\qquad \alpha>0.
\end{equation}
Related few-qubit work has also combined $\tanh$ activation with classical-shadow decoding~\cite{RaymondButsPistoia2025FewQubits}.  If $x_i z_i\ge c$ is required for every coordinate and target, with fixed $0<c<1$, the raw margin must be at least $\atanh(c)/\alpha$.  Equation~\eqref{eq:generic-bound} gives
\begin{equation}
    \alpha\ge \atanh(c)\sqrt{\frac{m}{2\ln2\,n}},
    \qquad \alpha^2n=\Omega(m).
    \label{eq:alpha-general}
\end{equation}
For $m=\Theta(n^k)$ this gives $\alpha=\Omega(n^{(k-1)/2})$.  Cubic compression already forces linear growth.  The exact Majorana theorem strengthens the quadratic case to
\begin{equation}
    \alpha\ge \atanh(c)\cot\!\left(\frac{\pi}{4n}\right)
    \sim \frac{4\atanh(c)}{\pi}\,n.
    \label{eq:alpha-majorana}
\end{equation}
The shrinking margin also sets the required measurement resolution, with the Majorana target in the worst case having an expectation value of order $1/n$.  Determining its sign in the isolated single-coordinate promise problem requires $\Omega(n^2)$ copies at fixed success probability even with collective measurements, while direct measurement achieves the same scaling (see Supplemental Material~\cite{SupplementalMaterial}).  Writing $N$ for the number of prepared copies (the number of measurement shots for direct measurement), the worst-case Majorana coordinate therefore has a minimally sufficient rescaling $\alpha_{\min}=\Theta(n)$ and an isolated sign-recovery cost $N=\Theta(n^2)=\Theta(\alpha_{\min}^2)$.  Classical shadows, joint measurements, and structured decoders may instead amortize readout across many observables or infer bits from additional structure~\cite{Huang2020ClassicalShadows,RaymondButsPistoia2025FewQubits,Wan2023MatchgateShadows,McNulty2025FermionicJoint}.  Both costs above are consequences of the shrinking margin, consistent with PCE studies identifying correlator resolution and binarization as practical bottlenecks~\cite{Sciorilli2025PCE,Padin2026ConstrainedPCE,Yoshioka2026LargeScalePCE,Sharma2026Hardware,LizzioBosco2026SimulablePCE}.

At the broadest level, if an arbitrary measurement on $N$ copies of $\rho_x$ returns any requested bit $x_i$ with worst-case success at least $p>1/2$, then $x\mapsto\rho_x^{\otimes N}$ is an $m$-bit QRAC on $nN$ qubits.  Nayak's bound~\cite{Nayak1999QRAC} gives
\begin{equation}
    nN\ge m[1-H_2(p)].
    \label{eq:width-copies}
\end{equation}
This applies to any random-access encoding and permits collective measurements across all copies.  Alternatively, QRAO magic-state rounding accesses the relaxed state differently from expectation value sign rounding and is not ruled out by a small coordinatewise margin.

\emph{Discussion.}
The exact Majorana coding theorem and the two general information-theoretic bounds play different roles.  The $\Theta(n^{-1})$ worst-case margin is specific to quadratic Majorana geometry.  Gaussian completeness follows from standard covariance geometry and shows that non-Gaussian states do not enlarge the two-point relaxation.  The $O(\!\sqrt{n/m})$ observable bound and Nayak's width-copy bound are standard information-theoretic baselines.  Together, these results separate what is special to Majorana geometry from what follows from compression alone.  For the complete quadratic Majorana PCE family arising from the free-fermionic construction, the exact margin theorem has direct operational consequences: an $\alpha=\Omega(n)$ rescaling and an $\Omega(n^2)$ single-coordinate copy cost, which provide sharp resource accounting benchmarks.  For other operator families, the minimax characterization and information-theoretic envelope provide diagnostics rather than a claimed scaling law.  Hence, when evaluating resource tradeoffs in quantum optimization applications, qubit compression should be assessed together with the observable family, the relevant margin scale, the decoder or rescaling rule, and the total measurement-shot budget.

Although the Hilbert space is exponentially large, the quadratic-Majorana relaxation accesses only a highly constrained set of two-point correlators.  Gaussian states exhaust the attainable covariance set, while the tournament correspondence identifies worst-case sign patterns with margin of order $1/n$.  Typical sign patterns retain the $n^{-1/2}$ scaling, so the theorem does not imply that typical compressed instances are particularly fragile.  Even this typical margin still shrinks with compression, however, so ``no free compression'' refers to resource tradeoff rather than failure of the encoding.

Nevertheless, useful schemes can fall outside these assumptions through algorithm modifications including problem-dependent observables, structured target sets, and constraint-aware, magic-state, POVM-based, or global decoders~\cite{Friedhoff2026ProblemAwarePCE,Suzuki2026DecoderConsistent,MaAngelakis2025InfoMinimal}.  Higher-order Majorana observables can probe non-Gaussian correlations invisible to two-point data~\cite{Bravyi2019ManyBody,HastingsODonnell2022,Tarabunga2026NonGaussianity}.  Continuous-domain PCE likewise lies outside the binary random-access premise~\cite{SolovievAdhikari2026Continuous}.  The exact theorem also does not directly cover sparse or homogeneous fixed-weight Pauli families, which lack the complete pairwise Majorana structure that yields the tournament correspondence.  Equation~\eqref{eq:minimax-main} gives a direct characterization of their worst-case margins.  It can also be applied to fixed families of quartic and higher-order Majorana observables, whose expectation values probe correlations beyond the covariance matrix.  Whether such practical sparse families follow the Majorana scaling or approach the generic information-theoretic envelope remains open.  No universal sparse-family scaling follows without specifying the observable family.

Qubit count alone therefore provides incomplete and potentially misleading resource accounting.  Register width reduction can carry hidden quantitative costs resulting from diminished margins, a restricted set of geometrically attainable expectation values, nonlinear rescaling, and additional state copies or measurement shots for readout.  The general bounds show that this tradeoff is not peculiar to Majoranas, while the exact result shows that operator geometry can magnify it.  Hence, the challenge lies not in compression itself, but in assessing resource tradeoffs holistically.  For quantum optimization, the resource advantages of compression must therefore be carefully assessed jointly in terms of qubit count, attainable margin, decoding rule, and measurement-shot budget, even before solution quality and overall algorithmic performance are considered.

\paragraph*{Data availability.}
All results are analytical.  Numerical spot checks in the Supplemental Material are consistency checks reproducible from the stated minimax optimization.  No external data or separate research software are required.

\begin{acknowledgments}
The author is grateful for helpful discussions with Filip Maciejewski, Davide Venturelli, and Marco Sciorilli, and acknowledges support from AFRL Contract No. FA8750-25-C-B0040.  OpenAI ChatGPT (GPT-5.6 Sol, accessed August 2026) was used under author-directed prompts for literature synthesis, cross-checks of derivations and claims, manuscript editing, and the drafting of \hyperref[fig:summary]{Fig.~\ref*{fig:summary}}.  The author assumes complete responsibility for all reported findings.
\end{acknowledgments}

\makeatletter
\def\bibfont{\small\baselineskip=10.1pt\@clubpenalty\clubpenalty}
\makeatother
\bibliography{bib}

@article{Sciorilli2025PCE,
  author = {Sciorilli, Marco and Borges, Lucas and Patti, Taylor L. and Garc{\'i}a-Mart{\'i}n, Diego and Camilo, Giancarlo and Anandkumar, Anima and Aolita, Leandro},
  title = {Towards large-scale quantum optimization solvers with few qubits},
  journal = {Nature Communications},
  volume = {16},
  pages = {476},
  year = {2025},
  doi = {10.1038/s41467-024-55346-z}
}

@article{Fuller2024QuantumRelaxations,
  author = {Fuller, Bryce and Hadfield, Charles and Glick, Jennifer R. and Imamichi, Takashi and Itoko, Toshinari and Thompson, Richard J. and Jiao, Yang and Kagele, Marna M. and Blom-Schieber, Adriana W. and Raymond, Rudy and Mezzacapo, Antonio},
  title = {Approximate Solutions of Combinatorial Problems via Quantum Relaxations},
  journal = {IEEE Transactions on Quantum Engineering},
  volume = {5},
  pages = {1--15},
  year = {2024},
  doi = {10.1109/TQE.2024.3421294}
}

@article{LizzioBosco2026SimulablePCE,
  author = {Lizzio Bosco, Daniele and Matos, Gabriel and Liu, Chen-Yu and Rapp, Frederic and Finger, Fabian and Rinaldi, Enrico and Meichanetzidis, Konstantinos},
  title = {Efficiently Simulable {Pauli} Correlation Encoding},
  journal = {arXiv preprint arXiv:2607.20409},
  year = {2026},
  eprint = {2607.20409},
  archivePrefix = {arXiv},
  primaryClass = {quant-ph}
}

@article{Yoshioka2026LargeScalePCE,
  author = {Yoshioka, Takuya and Sasada, Keita and Usuki, Riku and Nakano, Yuichiro and Fujii, Keisuke},
  title = {Scalable Variational Quantum Optimization via {Pauli} Correlation Encoding: Application to Large-Scale Power Demand Portfolio Optimization},
  journal = {arXiv preprint arXiv:2607.24722},
  year = {2026},
  eprint = {2607.24722},
  archivePrefix = {arXiv},
  primaryClass = {quant-ph}
}

@article{Barthe2025ExpectationSamplers,
  author = {Barthe, Alice and Grossi, Michele and Vallecorsa, Sofia and Tura, Jordi and Dunjko, Vedran},
  title = {Parameterized quantum circuits as universal generative models for continuous multivariate distributions},
  journal = {npj Quantum Information},
  volume = {11},
  pages = {121},
  year = {2025},
  doi = {10.1038/s41534-025-01064-3}
}

@article{Deng2018Tournament,
  author = {Deng, Bo and Li, Xueliang and Shader, Bryan and So, Wasin},
  title = {On the maximum skew spectral radius and minimum skew energy of tournaments},
  journal = {Linear and Multilinear Algebra},
  volume = {66},
  number = {7},
  pages = {1434--1441},
  year = {2018},
  doi = {10.1080/03081087.2017.1357676}
}

@inproceedings{Nayak1999QRAC,
  author = {Nayak, Ashwin},
  title = {Optimal lower bounds for quantum automata and random access codes},
  booktitle = {Proceedings of the 40th Annual Symposium on Foundations of Computer Science},
  pages = {369--376},
  year = {1999},
  doi = {10.1109/SFFCS.1999.814608}
}

@inproceedings{Ambainis1999DenseCoding,
  author = {Ambainis, Andris and Nayak, Ashwin and Ta-Shma, Amnon and Vazirani, Umesh},
  title = {Dense quantum coding and a lower bound for 1-way quantum automata},
  booktitle = {Proceedings of the 31st Annual ACM Symposium on Theory of Computing},
  pages = {376--383},
  year = {1999},
  doi = {10.1145/301250.301347}
}

@article{Farkas2025QRACBounds,
  author = {Farkas, M{\'a}t{\'e} and Miklin, Nikolai and Tavakoli, Armin},
  title = {Simple and general bounds on quantum random access codes},
  journal = {Quantum},
  volume = {9},
  pages = {1643},
  year = {2025},
  doi = {10.22331/q-2025-02-25-1643}
}

@article{Bravyi2005Lagrangian,
  author = {Bravyi, Sergey},
  title = {Lagrangian representation for fermionic linear optics},
  journal = {Quantum Information and Computation},
  volume = {5},
  number = {3},
  pages = {216--238},
  year = {2005},
  doi = {10.26421/QIC5.3-3},
  eprint = {quant-ph/0404180},
  archivePrefix = {arXiv}
}

@article{DiVincenzoTerhal2005FLO,
  author = {DiVincenzo, David P. and Terhal, Barbara M.},
  title = {Fermionic linear optics revisited},
  journal = {Foundations of Physics},
  volume = {35},
  number = {12},
  pages = {1967--1984},
  year = {2005},
  doi = {10.1007/s10701-005-8657-0}
}

@article{Huang2020ClassicalShadows,
  author = {Huang, Hsin-Yuan and Kueng, Richard and Preskill, John},
  title = {Predicting many properties of a quantum system from very few measurements},
  journal = {Nature Physics},
  volume = {16},
  pages = {1050--1057},
  year = {2020},
  doi = {10.1038/s41567-020-0932-7}
}

@article{Wan2023MatchgateShadows,
  author = {Wan, Kianna and Huggins, William J. and Lee, Joonho and Babbush, Ryan},
  title = {Matchgate shadows for fermionic quantum simulation},
  journal = {Communications in Mathematical Physics},
  volume = {404},
  pages = {629--700},
  year = {2023},
  doi = {10.1007/s00220-023-04844-0}
}

@article{McNulty2025FermionicJoint,
  author = {McNulty, Daniel and Calegari, Susane and Oszmaniec, Micha{\l}},
  title = {Optimal Fermionic Joint Measurements for Estimating Non-Commuting {Majorana} Observables},
  journal = {Quantum},
  volume = {9},
  pages = {1914},
  year = {2025},
  doi = {10.22331/q-2025-11-17-1914}
}

@article{Soloviev2026Portfolio,
  author = {Soloviev, Vicente P. and Krompiec, Michal},
  title = {Large-scale portfolio optimization using {Pauli} correlation encoding},
  journal = {Scientific Reports},
  volume = {16},
  pages = {25158},
  year = {2026},
  doi = {10.1038/s41598-026-54244-2}
}

@article{Alonso2026Benchmark,
  author = {Alonso, Fernando and Sampr{\'o}n, Colom{\'a}n and Veiga, Jacobo and Mussa Juane, Mariamo and G{\'o}mez, Andr{\'e}s},
  title = {Benchmark of {Pauli} Correlation Encoding for different optimisation problems},
  journal = {arXiv preprint arXiv:2606.18914},
  year = {2026},
  eprint = {2606.18914},
  archivePrefix = {arXiv},
  primaryClass = {quant-ph}
}

@article{Padin2026ConstrainedPCE,
  author = {Pad{\'i}n-Mart{\'i}nez, Jacobo and Soloviev, Vicente P. and Borrallo-Rentero, Alejandro and Rodr{\'i}guez-Otero, Ant{\'o}n and Alfonso-Rodr{\'i}guez, Raquel and Krompiec, Michal},
  title = {Progressive Binarization -- {Pauli} Correlation Encoding: a Continuation Method for Constrained Optimization},
  journal = {arXiv preprint arXiv:2602.17479},
  year = {2026}
}

@article{Tarabunga2026NonGaussianity,
  author = {Tarabunga, Poetri Sonya and Jobst, Bernhard and Morral-Yepes, Ra{\'u}l and Langer, Marc and Kraus, Barbara and Pollmann, Frank and Lin, Sheng-Hsuan},
  title = {Computable fermionic non-{G}aussianity from the covariance matrix},
  journal = {arXiv preprint arXiv:2607.02242},
  year = {2026}
}

@article{Friedhoff2026ProblemAwarePCE,
  author = {Friedhoff, Triet and Metkar, Mihir and Davis, Wade and Kumar, Vaibhaw and Galda, Alexey},
  title = {{Pauli} Correlation Encoding for {mRNA} Secondary Structure Prediction: Problem-Aware Decoding for Dense-Constraint {QUBO}s},
  journal = {arXiv preprint arXiv:2605.20163},
  year = {2026},
  eprint = {2605.20163},
  archivePrefix = {arXiv},
  primaryClass = {quant-ph}
}

@article{Patti2022MBE,
  author = {Patti, Taylor L. and Kossaifi, Jean and Anandkumar, Anima and Yelin, Susanne F.},
  title = {Variational quantum optimization with multibasis encodings},
  journal = {Physical Review Research},
  volume = {4},
  pages = {033142},
  year = {2022},
  doi = {10.1103/PhysRevResearch.4.033142}
}

@article{Tan2021QubitEfficient,
  author = {Tan, Benjamin and Lemonde, Marc-Antoine and Thanasilp, Supanut and Tangpanitanon, Jirawat and Angelakis, Dimitris G.},
  title = {Qubit-efficient encoding schemes for binary optimisation problems},
  journal = {Quantum},
  volume = {5},
  pages = {454},
  year = {2021},
  doi = {10.22331/q-2021-05-04-454}
}

@article{Hayashi2006QRAC,
  author = {Hayashi, Masahito and Iwama, Kazuo and Nishimura, Harumichi and Raymond, Rudy and Yamashita, Shigeru},
  title = {$(4,1)$-quantum random access coding does not exist---one qubit is not enough to recover one of four bits},
  journal = {New Journal of Physics},
  volume = {8},
  pages = {129},
  year = {2006},
  doi = {10.1088/1367-2630/8/8/129}
}

@article{Teramoto2023QRAO,
  author = {Teramoto, Kosei and Raymond, Rudy and Wakakuwa, Eyuri and Imai, Hiroshi},
  title = {Quantum-Relaxation Based Optimization Algorithms: Theoretical Extensions},
  journal = {arXiv preprint arXiv:2302.09481},
  year = {2023},
  eprint = {2302.09481},
  archivePrefix = {arXiv},
  primaryClass = {quant-ph},
}

@article{Kondo2025RecursiveQRAO,
  author = {Kondo, Ruho and Sato, Yuki and Raymond, Rudy and Yamamoto, Naoki},
  title = {Recursive Quantum Relaxation for Combinatorial Optimization Problems},
  journal = {Quantum},
  volume = {9},
  pages = {1594},
  year = {2025},
  doi = {10.22331/q-2025-01-15-1594}
}

@article{Tamura2024NoiseQRAO,
  author = {Tamura, Kentaro and Suzuki, Yohichi and Raymond, Rudy and Watanabe, Hiroshi C. and Sato, Yuki and Kondo, Ruho and Sugawara, Michihiko and Yamamoto, Naoki},
  title = {Noise Robustness of Quantum Relaxation for Combinatorial Optimization},
  journal = {IEEE Transactions on Quantum Engineering},
  volume = {5},
  pages = {3103009},
  year = {2024},
  doi = {10.1109/TQE.2024.3439135}
}

@article{He2025NonvariationalQRAO,
  author = {He, Zichang and Raymond, Rudy and Shaydulin, Ruslan and Pistoia, Marco},
  title = {Non-variational quantum random access optimization with alternating operator ansatz},
  journal = {Scientific Reports},
  volume = {15},
  pages = {29191},
  year = {2025},
  doi = {10.1038/s41598-025-13543-w}
}

@article{Suzuki2026QRAC,
  author = {Suzuki, Takayuki},
  title = {Analytical construction of $(n,n-1)$ quantum random access codes saturating the conjectured bound},
  journal = {Physical Review A},
  volume = {114},
  pages = {012441},
  year = {2026},
  doi = {10.1103/fcz5-2g3w}
}

@article{Liu2026QRAC,
  author = {Liu, Kangqiao},
  title = {Classical codes violate the conjectured square-root bound for quantum random access codes},
  journal = {arXiv preprint arXiv:2607.15617},
  year = {2026},
  eprint = {2607.15617},
  archivePrefix = {arXiv},
  primaryClass = {quant-ph}
}

@inproceedings{HastingsODonnell2022,
  author = {Hastings, Matthew B. and O'Donnell, Ryan},
  title = {Optimizing Strongly Interacting Fermionic {Hamiltonians}},
  booktitle = {Proceedings of the 54th Annual ACM SIGACT Symposium on Theory of Computing},
  pages = {776--789},
  year = {2022},
  doi = {10.1145/3519935.3519960}
}

@article{Bravyi2019ManyBody,
  author = {Bravyi, Sergey and Gosset, David and K{\"o}nig, Robert and Temme, Kristan},
  title = {Approximation algorithms for quantum many-body problems},
  journal = {Journal of Mathematical Physics},
  volume = {60},
  pages = {032203},
  year = {2019},
  doi = {10.1063/1.5085428}
}

@book{Vershynin2018HDP,
  author = {Vershynin, Roman},
  title = {High-Dimensional Probability: An Introduction with Applications in Data Science},
  publisher = {Cambridge University Press},
  year = {2018},
  doi = {10.1017/9781108231596}
}

@article{Sion1958Minimax,
  author = {Sion, Maurice},
  title = {On general minimax theorems},
  journal = {Pacific Journal of Mathematics},
  volume = {8},
  number = {1},
  pages = {171--176},
  year = {1958},
  doi = {10.2140/pjm.1958.8.171}
}

@misc{SupplementalMaterial,
  note = {See Supplemental Material appended below for covariance details, the proof and uniqueness statement for the exact Majorana margin, the Frobenius and typical-sign-pattern bounds, information-theoretic proofs, the spin-reversal gauge construction, nonlinear-decoder and readout consequences, the minimax characterization, and the comparison with fermionic joint-measurement incompatibility.}
}

@article{MaAngelakis2025InfoMinimal,
  author = {Ma, Gordon and Angelakis, Dimitris G.},
  title = {An Information-Minimal Geometry for Qubit-Efficient Optimization},
  journal = {arXiv preprint arXiv:2511.08362},
  year = {2025},
  eprint = {2511.08362},
  archivePrefix = {arXiv},
  primaryClass = {quant-ph}
}

@article{Sciorilli2025LABS,
  author = {Sciorilli, Marco and Camilo, Giancarlo and Maciel, Thiago O. and Canabarro, Askery and Borges, Lucas and Aolita, Leandro},
  title = {A competitive {NISQ} and qubit-efficient solver for the {LABS} problem},
  journal = {arXiv preprint arXiv:2506.17391},
  year = {2025},
  eprint = {2506.17391},
  archivePrefix = {arXiv},
  primaryClass = {quant-ph}
}

@article{Nguyen2026UnitCommitment,
  author = {Nguyen, Kien X. and Safro, Ilya and Liu, Xiaoyuan},
  title = {Scaling Quantum Optimization for Unit Commitment via {Pauli} Correlation Encoding},
  journal = {arXiv preprint arXiv:2605.17145},
  year = {2026},
  eprint = {2605.17145},
  archivePrefix = {arXiv},
  primaryClass = {quant-ph}
}

@article{Sharma2026Hardware,
  author = {Sharma, Monit and Lau, Hoong Chuin},
  title = {From Circuits to Hardware: Benchmarking Standard and Qubit-Efficient Quantum Optimization on Real Hardware},
  journal = {Quantum Science and Technology},
  year = {2026},
  doi = {10.1088/2058-9565/ae94a4},
  eprint = {2607.11637},
  archivePrefix = {arXiv},
  primaryClass = {quant-ph}
}

@article{Suzuki2026DecoderConsistent,
  author = {Suzuki, Takayuki},
  title = {Decoder-Consistent {Hamiltonians} for {POVM}-Based Quantum Relaxations},
  journal = {arXiv preprint arXiv:2606.05604},
  year = {2026},
  eprint = {2606.05604},
  archivePrefix = {arXiv},
  primaryClass = {quant-ph}
}

@article{SolovievAdhikari2026Continuous,
  author = {Soloviev, Vicente P. and Adhikari, Bibhas},
  title = {Quantum Learning of Classical Correlations with continuous-domain {Pauli} Correlation Encoding},
  journal = {arXiv preprint arXiv:2604.05637},
  year = {2026}
}

@article{TanJafar2026QRAC,
  author = {Tan, Shuo and Jafar, Syed A.},
  title = {Optimal Average Success Probabilities of Binary $(n,n-1)$ and $(n,n-2)$ Quantum Random Access Codes via a Proof of the Corresponding Conjectured Bound},
  journal = {arXiv preprint arXiv:2607.10414},
  year = {2026},
  eprint = {2607.10414},
  archivePrefix = {arXiv},
  primaryClass = {quant-ph}
}

@article{doCarmo2026WarmPCE,
  author = {do Carmo, Rafael Sim{\~o}es and dos Reis, Renato Gomes and Silva, Samuel Fernando F. and Arruda, Luiz Gustavo E. and Fanchini, Felipe F.},
  title = {Warm-Starting {PCE} for Traveling Salesman Problem},
  journal = {Brazilian Journal of Physics},
  volume = {56},
  pages = {49},
  year = {2026},
  doi = {10.1007/s13538-025-01966-9},
  eprint = {2509.14414},
  archivePrefix = {arXiv},
  primaryClass = {quant-ph}
}

@article{ZhaoRubin2024FermionicEmbeddings,
  author = {Zhao, Andrew and Rubin, Nicholas C.},
  title = {Expanding the reach of quantum optimization with fermionic embeddings},
  journal = {Quantum},
  volume = {8},
  pages = {1451},
  year = {2024},
  doi = {10.22331/q-2024-08-28-1451},
  eprint = {2301.01778},
  archivePrefix = {arXiv},
  primaryClass = {quant-ph}
}

@article{Seeley2012BravyiKitaev,
  author = {Seeley, Jacob T. and Richard, Martin J. and Love, Peter J.},
  title = {The {Bravyi--Kitaev} Transformation for Quantum Computation of Electronic Structure},
  journal = {The Journal of Chemical Physics},
  volume = {137},
  pages = {224109},
  year = {2012},
  doi = {10.1063/1.4768229},
  eprint = {1208.5986},
  archivePrefix = {arXiv},
  primaryClass = {quant-ph}
}

@article{SundarDupont2026QubitEfficient,
  author = {Sundar, Bhuvanesh and Dupont, Maxime},
  title = {Qubit-efficient quantum combinatorial-optimization solver},
  journal = {Physical Review Applied},
  volume = {25},
  pages = {034071},
  year = {2026},
  doi = {10.1103/s5jv-jh24},
  eprint = {2407.15539},
  archivePrefix = {arXiv},
  primaryClass = {quant-ph}
}

@inproceedings{TeramotoRaymond2023Entanglement,
  author = {Teramoto, Kosei and Raymond, Rudy and Imai, Hiroshi},
  title = {The Role of Entanglement in Quantum-Relaxation Based Optimization Algorithms},
  booktitle = {2023 IEEE International Conference on Quantum Computing and Engineering (QCE)},
  volume = {1},
  pages = {543--553},
  year = {2023},
  doi = {10.1109/QCE57702.2023.00068},
  eprint = {2302.00429},
  archivePrefix = {arXiv},
  primaryClass = {quant-ph}
}

@inproceedings{SharmaRaymond2024Knapsack,
  author = {Sharma, Monit and Jin, Yan and Lau, Hoong Chuin and Raymond, Rudy},
  title = {Quantum Relaxation for Solving Multiple Knapsack Problems},
  booktitle = {2024 IEEE International Conference on Quantum Computing and Engineering (QCE)},
  volume = {1},
  pages = {692--698},
  year = {2024},
  doi = {10.1109/QCE60285.2024.00086},
  eprint = {2404.19474},
  archivePrefix = {arXiv},
  primaryClass = {quant-ph}
}

@inproceedings{RaymondButsPistoia2025FewQubits,
  author = {Raymond, Rudy and Buts, Alexander and Pistoia, Marco},
  title = {Combinatorial Optimization with Few Qubits},
  booktitle = {2025 IEEE International Conference on Quantum Computing and Engineering (QCE)},
  volume = {1},
  pages = {36--47},
  year = {2025},
  doi = {10.1109/QCE65121.2025.00015}
}

@article{Abbas2024QuantumOptimization,
  author = {Abbas, Amira and others},
  title = {Challenges and opportunities in quantum optimization},
  journal = {Nature Reviews Physics},
  volume = {6},
  pages = {718--735},
  year = {2024},
  doi = {10.1038/s42254-024-00770-9}
}

@article{Sawaya2023EncodingTradeoffs,
  author = {Sawaya, Nicolas P. D. and Schmitz, Albert T. and Hadfield, Stuart},
  title = {Encoding trade-offs and design toolkits in quantum algorithms for discrete optimization: coloring, routing, scheduling, and other problems},
  journal = {Quantum},
  volume = {7},
  pages = {1111},
  year = {2023},
  doi = {10.22331/q-2023-09-14-1111},
  eprint = {2203.14432},
  archivePrefix = {arXiv},
  primaryClass = {quant-ph}
}

@misc{Farhi2014QAOA,
  author = {Farhi, Edward and Goldstone, Jeffrey and Gutmann, Sam},
  title = {A Quantum Approximate Optimization Algorithm},
  year = {2014},
  eprint = {1411.4028},
  archivePrefix = {arXiv},
  primaryClass = {quant-ph}
}

@article{KadowakiNishimori1998QA,
  author = {Kadowaki, Tadashi and Nishimori, Hidetoshi},
  title = {Quantum Annealing in the Transverse {Ising} Model},
  journal = {Physical Review E},
  volume = {58},
  pages = {5355--5363},
  year = {1998},
  doi = {10.1103/PhysRevE.58.5355}
}

@article{Ponce2025Decomposition,
  author = {Ponce, Moises and Herrman, Rebekah and Lotshaw, Phillip C. and Powers, Sarah S. and Siopsis, George and Humble, Travis S. and Ostrowski, James},
  title = {Graph decomposition techniques for solving combinatorial optimization problems with variational quantum algorithms},
  journal = {Quantum Information Processing},
  volume = {24},
  pages = {60},
  year = {2025},
  doi = {10.1007/s11128-025-04675-z}
}

@article{BradyHadfield2024IQA,
  author = {Brady, Lucas T. and Hadfield, Stuart},
  title = {Iterative quantum algorithms for maximum independent set},
  journal = {Physical Review A},
  volume = {110},
  pages = {052435},
  year = {2024},
  doi = {10.1103/PhysRevA.110.052435}
}

@article{BravyiKitaev2002,
  author = {Bravyi, Sergey B. and Kitaev, Alexei Yu.},
  title = {Fermionic Quantum Computation},
  journal = {Annals of Physics},
  volume = {298},
  pages = {210--226},
  year = {2002},
  doi = {10.1006/aphy.2002.6254}
}

@article{Jiang2020Ternary,
  author = {Jiang, Zhang and Kalev, Amir and Mruczkiewicz, Wojciech and Neven, Hartmut},
  title = {Optimal fermion-to-qubit mapping via ternary trees with applications to reduced quantum states learning},
  journal = {Quantum},
  volume = {4},
  pages = {276},
  year = {2020},
  doi = {10.22331/q-2020-06-04-276},
  eprint = {1910.10746},
  archivePrefix = {arXiv},
  primaryClass = {quant-ph}
}

@article{Hadfield2019QAOA,
  author = {Hadfield, Stuart and Wang, Zhihui and O'Gorman, Bryan and Rieffel, Eleanor G. and Venturelli, Davide and Biswas, Rupak},
  title = {From the Quantum Approximate Optimization Algorithm to a Quantum Alternating Operator Ansatz},
  journal = {Algorithms},
  volume = {12},
  number = {2},
  pages = {34},
  year = {2019},
  doi = {10.3390/a12020034}
}

@article{Farhi2001Adiabatic,
  author = {Farhi, Edward and Goldstone, Jeffrey and Gutmann, Sam and Lapan, Joshua and Lundgren, Andrew and Preda, Daniel},
  title = {A Quantum Adiabatic Evolution Algorithm Applied to Random Instances of an {NP}-Complete Problem},
  journal = {Science},
  volume = {292},
  number = {5516},
  pages = {472--475},
  year = {2001},
  doi = {10.1126/science.1057726}
}

\clearpage
\onecolumngrid
\hypersetup{pageanchor=false}
\setcounter{page}{1}
\setcounter{equation}{0}
\setcounter{section}{0}
\setcounter{subsection}{0}
\setcounter{proposition}{0}
\renewcommand{\theequation}{S\arabic{equation}}
\renewcommand{\thesection}{S\Roman{section}}
\renewcommand{\thesubsection}{\thesection.\Alph{subsection}}
\renewcommand{\theproposition}{S\arabic{proposition}}
\renewcommand{\theHequation}{supp.\arabic{equation}}
\renewcommand{\theHsection}{supp.\Roman{section}}
\renewcommand{\theHsubsection}{supp.\Roman{section}.\Alph{subsection}}
\renewcommand{\theHproposition}{supp.\arabic{proposition}}

\begin{center}
\vspace*{0.33in}
{\large\bfseries SUPPLEMENTAL MATERIAL}
\end{center}
\vspace{0.32in}

This Supplemental Material provides the technical details behind the Letter.  We use the same notation as the main text.  $n$ is the number of qubits and $m$ is the number of logical binary variables.  For the expectation value encodings considered here, $m$ is also the number of designated observables.  For the complete quadratic-Majorana family, $m=\binom{2n}{2}$.  Where useful, definitions are repeated so that individual arguments are self-contained.

\section{Fermionic covariance body and Gaussian completeness}
\label{supp:gaussian}
Let $\gamma_1,\ldots,\gamma_{2n}$ satisfy $\{\gamma_a,\gamma_b\}=2\delta_{ab}I$, where $I$ is the identity operator.  On $n$ qubits, these generators furnish an irreducible representation of the complex Clifford algebra $\mathrm{Cl}_{2n}(\mathbb C)$, unique up to unitary equivalence.  Changing the fermion-to-qubit mapping, for example from Jordan-Wigner to Bravyi-Kitaev, changes the Pauli-string realization and can change Pauli-weight and locality properties, but not the attainable covariance body or any margin defined from it~\cite{Bravyi2005Lagrangian,Seeley2012BravyiKitaev}.  For a density operator $\rho$, let
\begin{equation}
  \Gamma_{ab}=i\Tr(\rho\gamma_a\gamma_b)\quad(a\neq b),\qquad \Gamma_{aa}=0.
  \label{eq:s-cov}
\end{equation}
The matrix $\Gamma$ is real and antisymmetric.  Every real antisymmetric matrix can be brought by an orthogonal matrix $O$ to
\begin{equation}
    O\Gamma O^T=\bigoplus_{j=1}^{n}
    \begin{pmatrix}0&\lambda_j\\-\lambda_j&0\end{pmatrix}.
    \label{eq:s-canonical-Gamma}
\end{equation}
The signs of the $\lambda_j$ may be absorbed into the canonical blocks so that $O$ can be chosen in $SO(2n)$.  The transformed Majoranas $\tilde\gamma_a=\sum_bO_{ab}\gamma_b$ obey the same anticommutation relations, with $\lambda_j=\Tr\!\left(\rho\,i\tilde\gamma_{2j-1}\tilde\gamma_{2j}\right)$.  Because $i\tilde\gamma_{2j-1}\tilde\gamma_{2j}$ is Hermitian and squares to $I$, $|\lambda_j|\le1$.  Hence every physical covariance matrix belongs to
\begin{equation}
  \calK_n=\{\Gamma\in\R^{2n\times2n}:\Gamma^T=-\Gamma,\ \|\Gamma\|_{\op}\le1\}.
  \label{eq:s-Kn}
\end{equation}
Equivalently, $I+i\Gamma\succeq0$.

Conversely, let $\Gamma\in\calK_n$ and use the canonical form in Eq.~\eqref{eq:s-canonical-Gamma}.  For each canonical mode define
\begin{equation}
    \rho_j=\frac12\left(I_j+\lambda_j\,i\tilde\gamma_{2j-1}\tilde\gamma_{2j}\right),
    \label{eq:s-single-mode-gaussian}
\end{equation}
where $I_j$ is the identity on that mode.  Since $|\lambda_j|\le1$, $\rho_j$ is a valid one-mode thermal Gaussian state.  The product state $\bigotimes_j\rho_j$ has covariance $O\Gamma O^T$ in the transformed Majorana basis.  The $SO(2n)$ basis change is implemented by a fermionic Gaussian unitary.  Transforming back therefore yields a Gaussian state with covariance $\Gamma$~\cite{Bravyi2005Lagrangian,DiVincenzoTerhal2005FLO,Wan2023MatchgateShadows}.  Thus arbitrary density operators and fermionic Gaussian states generate exactly the same quadratic covariance body.

For any Hermitian quadratic observable with real coefficients,
\begin{equation}
  B_\ell=b_\ell I+\sum_{a<b}c_{\ell,ab}\,i\gamma_a\gamma_b,
  \qquad
  \langle B_\ell\rangle=b_\ell+\sum_{a<b}c_{\ell,ab}\Gamma_{ab}.
\end{equation}
Therefore the set of attainable vectors of observable expectation values $(\langle B_\ell\rangle)_\ell$ is the same for arbitrary density operators and for Gaussian states.  This proves Proposition~1 of the Letter.  The conclusion concerns representability.  Optimizing a nonlinear objective over the spectrahedron may still be hard.

\section{Exact universal margin for complete quadratic-Majorana observables}
\label{supp:exact-margin}
Set $d=2n$ and $\theta=\pi/(2d)=\pi/(4n)$, where $d$ is the number of Majorana operators.  For a target sign pattern $x=(x_{ab})_{a<b}$, define
\begin{equation}
    \delta_x=\max_\rho\min_{a<b}x_{ab}\Tr(\rho A_{ab}),\qquad A_{ab}=i\gamma_a\gamma_b.
    \label{eq:s-deltax}
\end{equation}
Associate to $x$ the real skew-symmetric tournament matrix $T_x$ defined by
\begin{equation}
    (T_x)_{ab}=x_{ab}\ (a<b),\qquad (T_x)_{ba}=-x_{ab},\qquad (T_x)_{aa}=0.
    \label{eq:s-Tx}
\end{equation}
Switching a vertex subset means reversing every tournament edge between that subset and its complement.  In matrix form it is $T_x\mapsto DT_xD$ with diagonal $D_{aa}\in\{\pm1\}$.  Relabeling is permutation similarity.  Because $T_x/\|T_x\|_{\op}\in\calK_n$, it is a valid covariance matrix and therefore gives
\begin{equation}
    \delta_x\ge \frac{1}{\|T_x\|_{\op}}.
    \label{eq:s-target-lower}
\end{equation}
This feasible ray gives a lower bound for every target, but it need not be optimal for arbitrary $x$.  Determining $\delta_x$ exactly for sign patterns not equivalent to the transitive tournament under switching and relabeling remains open.

Deng \emph{et al.} proved that the skew spectral radius of any tournament matrix of order $d$ satisfies~\cite{Deng2018Tournament}
\begin{equation}
    \|T_x\|_{\op}\le\cot\theta,
    \label{eq:s-tournament}
\end{equation}
with equality only, up to switching and relabeling, for the transitive tournament (equivalently, the acyclic tournament $x_{ab}=+1$ for $a<b$ in a suitable vertex ordering).  Hence every target obeys $\delta_x\ge\tan\theta$.

\subsection{Matching upper bound for the transitive target}
For the transitive target $x_{ab}=+1$ for all $a<b$, define positive weights
\begin{equation}
    w_{ab}=\frac{2\tan\theta}{d}\sin\!\left(\frac{\pi(b-a)}{d}\right),
    \qquad a<b.
    \label{eq:s-pweights}
\end{equation}
They are normalized because
\begin{align}
\sum_{a<b}\sin\!\left(\frac{\pi(b-a)}{d}\right)
&=\sum_{h=1}^{d-1}(d-h)\sin\!\left(\frac{\pi h}{d}\right)\\
&=\frac{d}{2}\cot\theta.
\label{eq:s-sin-sum}
\end{align}
The last identity follows, for example, by differentiating a finite geometric series.

Let $W$ be the skew coefficient matrix
\begin{equation}
    W_{ab}=\frac{2\tan\theta}{d}\sin\!\left(\frac{\pi(b-a)}{d}\right)
    \qquad(a,b=1,\ldots,d),
    \label{eq:s-P}
\end{equation}
with $W_{aa}=0$ automatically.  Set $u_a=\cos(\pi a/d)$ and $v_a=\sin(\pi a/d)$.  Then
\begin{equation}
    W=\frac{2\tan\theta}{d}(uv^T-vu^T).
    \label{eq:s-ranktwo}
\end{equation}
The vectors $u$ and $v$ are orthogonal and satisfy $\|u\|^2=\|v\|^2=d/2$.  Therefore $W$ has rank two and its two nonzero singular values are both $\tan\theta$.

With $\gamma=(\gamma_1,\ldots,\gamma_d)^T$, define
\begin{equation}
    H_w=\sum_{a<b}w_{ab}\,i\gamma_a\gamma_b
    =\frac{i}{2}\gamma^TW\gamma.
    \label{eq:s-Hw}
\end{equation}
If $s_1,\ldots,s_n\ge0$ are the $n$ nonnegative skew-block magnitudes of a real skew matrix, the same orthogonal change of Majorana basis puts the corresponding quadratic Hamiltonian in the form
\begin{equation}
    H=\sum_{j=1}^{n}s_j\,i\tilde\gamma_{2j-1}\tilde\gamma_{2j}.
\end{equation}
The $n$ terms commute, and each $i\tilde\gamma_{2j-1}\tilde\gamma_{2j}$ has eigenvalues $\pm1$.  Hence
\begin{equation}
    \operatorname{spec}(H)=\left\{\sum_{j=1}^{n}\epsilon_j s_j:\epsilon_j\in\{\pm1\}\right\},
    \qquad \lambda_{\max}(H)=\sum_{j=1}^{n}s_j.
    \label{eq:s-quadratic-spectrum}
\end{equation}
For $H_w$ only one canonical value is nonzero, $s_1=\tan\theta$, so $\lambda_{\max}(H_w)=\tan\theta$.  If $\rho$ represents the transitive target with $\Tr(\rho A_{ab})\ge\delta$ for every $a<b$, positivity and normalization of the weights yield
\begin{equation}
    \delta\le\sum_{a<b}w_{ab}\Tr(\rho A_{ab})
    =\Tr(\rho H_w)\le\tan\theta.
\end{equation}
Together with Eq.~\eqref{eq:s-target-lower}, this proves $\delta_x=\tan\theta$ for the transitive target and hence Theorem~1 of the Letter.

\subsection{Uniqueness of the worst sign patterns}
If $T_x$ is not equivalent to the transitive tournament under switching and relabeling, the uniqueness part of Ref.~\cite{Deng2018Tournament} gives $\|T_x\|_{\op}<\cot\theta$.  Equation~\eqref{eq:s-target-lower} then gives $\delta_x>\tan\theta$, so such a target cannot attain the universal minimum.  Conversely, switching and relabeling act by signed-permutation similarity on $T_x$.  The covariance body $\calK_n$ is invariant under the same signed permutations, so $\delta_x$ is unchanged.  Every sign pattern obtained from the transitive tournament by switching and relabeling therefore attains $\tan\theta$.

\section{Frobenius bound and typical sign patterns}
\label{supp:typical}
For any physical covariance matrix $\Gamma\in\calK_n$, the $n$ parameters $\lambda_j$ in its canonical $2\times2$ block form lie in $[-1,1]$.  Therefore $\|\Gamma\|_F^2=2\sum_{j=1}^{n}\lambda_j^2\le2n$, and hence
\begin{equation}
    \sum_{a<b}\langle A_{ab}\rangle^2
    =\sum_{a<b}\Gamma_{ab}^2
    =\frac12\|\Gamma\|_F^2\le n.
    \label{eq:s-frob}
\end{equation}
If all $m=\binom{2n}{2}$ aligned coordinates have magnitude at least $\delta$, Eq.~\eqref{eq:s-frob} gives
\begin{equation}
    \delta\le\sqrt{\frac{n}{m}}=\frac{1}{\sqrt{2n-1}}.
    \label{eq:s-frob-margin}
\end{equation}
This purely geometric bound reproduces the general $n^{-1/2}$ scale, with a slightly sharper constant for this family.  The exact transitive-tournament result $\tan\!\left(\frac{\pi}{4n}\right)=\Theta(n^{-1})$ therefore requires information beyond the Frobenius constraint.

For a uniformly random target $x$, the entries $(T_x)_{ab}$ above the diagonal are independent Rademacher signs.  Write $T_x=U-U^T$, where $U$ is strictly upper triangular.  Its random entries are independent, mean-zero, sub-Gaussian variables (the remaining entries are deterministic zeros), so Theorem~4.4.5 of Ref.~\cite{Vershynin2018HDP} gives $\|U\|_{\op}=O(\sqrt d)$ with high probability.  Hence $\|T_x\|_{\op}\le2\|U\|_{\op}=O(\sqrt d)$.  The deterministic inequality $\|T_x\|_{\op}\ge\|T_x\|_F/\sqrt d=\sqrt{d-1}$ gives the matching $\Omega(\sqrt d)$ lower bound.  Thus $\|T_x\|_{\op}=\Theta(\sqrt n)$ with high probability.  Combining the ray lower bound in Eq.~\eqref{eq:s-target-lower} with Eq.~\eqref{eq:s-frob-margin} yields
\begin{equation}
    \delta_x=\Theta(n^{-1/2})
    \qquad\text{with high probability over a random sign pattern.}
    \label{eq:s-typical}
\end{equation}
This establishes the typical-versus-worst separation used in the Letter: typical sign patterns live at the information-theoretic scale, whereas the exceptional family obtained from the transitive tournament by switching and relabeling forces the universal $n^{-1}$ bottleneck.

\section{Coordinate information budget}
\label{supp:bias}
Let $A_1,\ldots,A_m$ be the designated Hermitian binary observables, with $A_i^2=I$.  Let $Z=(Z_1,\ldots,Z_m)$ be uniformly distributed on $\{\pm1\}^m$, and let $Q$ denote the $n$-qubit system prepared in state $\rho_Z$.  For each $i$, measure $A_i$ and, if necessary, reverse the labels of its two outcomes so that the resulting classical variable $Y_i$ has nonnegative average correlation with $Z_i$.  Its average success probability is $p_i=\Pr[Y_i=Z_i]=(1+|\beta_i|)/2$.  Here $H_2(q)=-q\log_2 q-(1-q)\log_2(1-q)$ is the binary entropy, $I(U:V)$ denotes mutual information in bits, and $S(Q)$ is the von Neumann entropy in bits.  Data processing gives $I(Z_i:Q)\ge I(Z_i:Y_i)$.  Since $Z_i$ is unbiased and binary, Fano's inequality gives
\begin{equation}
    I(Z_i:Y_i)\ge1-H_2(p_i)
    =1-H_2\!\left(\frac{1+|\beta_i|}{2}\right).
    \label{eq:s-fano}
\end{equation}
Because the $Z_i$ are independent,
\begin{align}
    I(Z:Q)
    &=\sum_{i=1}^{m} I(Z_i:Q\mid Z_{<i})\\
    &=\sum_{i=1}^{m} I(Z_i:Q,Z_{<i})\\
    &\ge\sum_{i=1}^{m}I(Z_i:Q),
    \label{eq:s-superadditive-info}
\end{align}
where $Z_{<i}=(Z_1,\ldots,Z_{i-1})$, the second line uses $I(Z_i:Z_{<i})=0$, and the third is data processing.  The Holevo bound then gives $I(Z:Q)\le S(Q)\le n$.  Summing Eq.~\eqref{eq:s-fano} proves Proposition~2 of the Letter.

For $t\in[-1,1]$,
\begin{equation}
    1-H_2\!\left(\frac{1+t}{2}\right)
    =\frac{1}{\ln2}D_{\rm KL}\!\left(\mathrm{Ber}\!\left(\frac{1+t}{2}\right)\middle\|\mathrm{Ber}\!\left(\frac12\right)\right)
    \ge\frac{t^2}{2\ln2},
    \label{eq:s-pinsker}
\end{equation}
where $D_{\rm KL}(P\|Q)$ is relative entropy with natural logarithms and $\mathrm{Ber}(r)$ is the Bernoulli distribution of mean $r$.  The final step is Pinsker's inequality.  Applying Eq.~\eqref{eq:s-pinsker} coordinatewise gives $\sum_i\beta_i^2\le2\ln2\,n$.

If an encoding has pointwise designated-observable margin $x_i\Tr(\rho_xA_i)\ge\delta$ for every $x,i$, then $\beta_i\ge\delta$ and
\begin{equation}
    n\ge m\left[1-H_2\!\left(\frac{1+\delta}{2}\right)\right],
    \qquad
    \delta\le\sqrt{\frac{2\ln2\,n}{m}}.
    \label{eq:s-generic-bound}
\end{equation}
This is Nayak's QRAC lower bound in the present sign-margin language~\cite{Nayak1999QRAC}.

\section{Spin-reversal gauge family}
\label{supp:gauge}
Let $C(x)=\sum_{i<j}J_{ij}x_ix_j+\sum_i h_i x_i$, with $J_{ij},h_i\in\R$, be an Ising objective with unique optimum $x^\star$.  For $s\in\{\pm1\}^m$, define $C_s(x)=C(s\odot x)$, where $\odot$ denotes componentwise multiplication.  If $x_s^\star$ optimizes $C_s$, then $s\odot x_s^\star=x^\star$, hence $x_s^\star=s\odot x^\star$.  The map $s\mapsto s\odot x^\star$ is a bijection of the hypercube.  Hence a fixed observable family that has pointwise margin at least $\delta$ at the optimum of every gauge provides a universal robust encoding and obeys Eq.~\eqref{eq:s-generic-bound}.

The fixed-observable-family assumption is essential.  A solver allowed to gauge-transform its observable assignment along with the instance uses a problem-dependent encoding and lies outside the statement.  The gauge construction therefore shows why the universal-margin premise is relevant to optimization.  A fixed encoding can already be forced to cover every optimum sign pattern by an isospectral, graph-preserving family of instances.

\section{Nonlinear decoders and rescaling}
\label{supp:rescaling}
The hyperbolic tangent is not special.  Let $f_\alpha:\R\to[-1,1]$ be odd and monotone, and suppose a decoder requires $x_i f_\alpha(\Tr\rho_xA_i)\ge c>0$ for all $x,i$.  If $r_\alpha(c)>0$ is the smallest positive input satisfying $f_\alpha(r_\alpha(c))\ge c$, then universal decoding requires raw margin at least $r_\alpha(c)$.  Therefore $r_\alpha(c)\le\sqrt{2\ln2\,n/m}$ for arbitrary designated binary observables, and $r_\alpha(c)\le\tan(\pi/4n)$ for the complete quadratic-Majorana family.  For $f_\alpha(t)=\tanh(\alpha t)$, $r_\alpha(c)=\atanh(c)/\alpha$, giving the nonlinear-rescaling bounds in the Letter.

The original homogeneous PCE construction proves universal sign representability with correlators of magnitude $\Theta(1/m)$ and empirically uses a growing $\alpha$ to move the $\tanh$ response out of its nearly linear regime when correlators are small~\cite{Sciorilli2025PCE}.  Related few-qubit QRAC work also studies $\tanh$ activation and classical-shadow decoding~\cite{RaymondButsPistoia2025FewQubits}.  The information bound does not assert that a particular empirical schedule is optimal for that operator family.  It shows instead that the rescaling parameter must grow under uniform fixed-magnitude decoding, and that cubic compression already enforces linear growth in $n$ at the information-theoretic level.

\section{Width-copy tradeoff and single-coordinate sign recovery}
\label{supp:readout}
We separate a decoder-independent random-access statement from the cost of resolving the sign of one binary-observable expectation from copies of the state.

\begin{proposition}[Width-copy bound]
Let $x\mapsto\rho_x$ encode $m$ bits into $n$ qubits.  Suppose that, for each requested coordinate $i$, there exists a POVM on $N$ copies $\rho_x^{\otimes N}$ that returns $x_i$ with worst-case success probability at least $p>1/2$.  Then
\begin{equation}
    nN\ge m[1-H_2(p)].
    \label{eq:s-width-copies}
\end{equation}
\end{proposition}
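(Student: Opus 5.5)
The plan is to reduce this width-copy bound to the coordinate information budget argument of Section~\ref{supp:bias}, applied to the encoding $x\mapsto\rho_x^{\otimes N}$ on the enlarged register of $nN$ qubits. Let $Z$ be uniform on $\{\pm1\}^m$ and let $Q^N$ denote the $nN$-qubit system prepared in $\rho_Z^{\otimes N}$. The Holevo bound gives $I(Z:Q^N)\le S(Q^N)\le nN$. The chain-rule and independence argument of Eq.~\eqref{eq:s-superadditive-info} uses nothing about the encoding beyond independence of the $Z_i$, so it carries over verbatim and gives $I(Z:Q^N)\ge\sum_i I(Z_i:Q^N)$.

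For each coordinate $i$, let $Y_i$ be the outcome of the assumed POVM on $\rho_Z^{\otimes N}$. Data processing gives $I(Z_i:Q^N)\ge I(Z_i:Y_i)$.

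The worst-case success guarantee $\Pr[Y_i=x_i\mid Z=x]\ge p$ holds for every $x$. Averaging over uniform $Z$ therefore gives $\Pr[Y_i=Z_i]\ge p>1/2$. Fano's inequality for the binary unbiased variable $Z_i$ then yields
\[
H(Z_i\mid Y_i)\le H_2\bigl(\Pr[Y_i\neq Z_i]\bigr)\le H_2(p),
\]
where the second inequality uses that $H_2$ is decreasing on $[1/2,1]$ and that $\Pr[Y_i\neq Z_i]\le 1-p<1/2$. Since $H(Z_i)=1$, this gives $I(Z_i:Y_i)\ge1-H_2(p)$.

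Summing over $i$ and combining with the Holevo bound gives $nN\ge m[1-H_2(p)]$, which is Eq.~\eqref{eq:s-width-copies}.

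The only point needing care is Fano's inequality: it must be applied with the average error, and monotonicity of $H_2$ must be used correctly. The binary alphabet removes the $\log(|\mathcal X|-1)$ term. Collective measurements across copies cause no difficulty, because the POVM is absorbed into data processing on $Q^N$. The dimension of $Q^N$ is $2^{nN}$, which is exactly what the Holevo step needs.
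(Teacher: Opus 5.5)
Your proposal is correct and is essentially the paper's proof: the paper likewise treats $x\mapsto\rho_x^{\otimes N}$ as an $m$-bit random access code on $nN$ qubits and then simply cites Nayak's bound, whereas you unpack that citation using the same Holevo, chain-rule, and Fano argument the paper gives for Proposition~2 in Sec.~\ref{supp:bias}. Unpacking it this way makes the proof self-contained and shows that only the averaged success probability $\Pr[Y_i=Z_i]\ge p$ is needed; the step $H_2(\Pr[Y_i\neq Z_i])\le H_2(p)$ tacitly uses the symmetry $H_2(q)=H_2(1-q)$, which is fine.
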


\noindent\emph{Proof.}
The states $\rho_x^{\otimes N}$ live on $nN$ qubits and, by assumption, form an $m$-bit QRAC on $nN$ qubits with recovery probability at least $p$.  Nayak's lower bound applies directly~\cite{Nayak1999QRAC}.  \hfill$\square$

This argument permits collective measurements across all $N$ copies and does not require the decoder to measure the original designated observables $A_i$.

For comparison, consider the single-coordinate promise problem for a traceless binary observable $A$, $A^2=I$, whose expectation is known only to lie in either $[\eta,1]$ or $[-1,-\eta]$.  Measuring $A$ directly and applying Hoeffding's inequality gives an $O(\eta^{-2})$ upper bound for fixed error probability.  The same scaling is unavoidable even for arbitrary collective measurements across copies.  Indeed, in dimension $D$ the commuting states
\begin{equation}
    \sigma_\pm=\frac{I\pm\eta A}{D},
    \qquad
    F(\sigma_+,\sigma_-)=\sqrt{1-\eta^2},
\end{equation}
have expectations $\Tr(\sigma_\pm A)=\pm\eta$.  For $N$ copies the fidelity is $(1-\eta^2)^{N/2}$.  By the Fuchs-van de Graaf inequality, the trace distance is at most $\sqrt{1-(1-\eta^2)^N}$, so achieving any fixed success probability strictly above $1/2$ in discriminating the two signs requires $N=\Omega(\eta^{-2})$.  Thus the optimal copy complexity of this single-coordinate promise problem is $\Theta(\eta^{-2})$.

For a worst-case target of the complete Majorana encoding, Theorem~1 of the Letter implies that the optimal minimum aligned expectation value is $\tan(\pi/4n)$.  Resolving the sign of the weakest correlator from its expectation value promise therefore requires
\begin{equation}
    N_{\rm coord}=\Omega\!\left(\cot^2\!\left(\frac{\pi}{4n}\right)\right)
    =\Omega(n^2)=\Omega(m),
    \label{eq:s-direct}
\end{equation}
and direct measurement achieves the same scaling.  This is deliberately narrower than Eq.~\eqref{eq:s-width-copies}.  Joint measurements, classical shadows, or structured postprocessing may amortize copies across many coordinates or infer bits from additional structure, even though collective measurements do not improve the $\eta^{-2}$ scaling for the isolated single-coordinate promise problem.

\section{Minimax characterization of robust margin}
\label{supp:minimax}
For a fixed target $x$ and observable family $\mathbf A=(A_1,\ldots,A_m)$, define $\delta_x(\mathbf A)=\max_\rho\min_i x_i\Tr(\rho A_i)$.  Let $\mathcal P_m=\{w\in\R_{\ge0}^m:\sum_i w_i=1\}$ be the probability simplex.  Since $\min_i y_i=\min_{w\in\mathcal P_m}\sum_i w_i y_i$, Sion's minimax theorem gives~\cite{Sion1958Minimax}
\begin{align}
    \delta_x(\mathbf A)
    &=\min_{w\in\mathcal P_m}\max_\rho\Tr\!\left[\rho\sum_i w_i x_iA_i\right]\\
    &=\min_{w\in\mathcal P_m}\lambda_{\max}\!\left(\sum_i w_i x_iA_i\right).
    \label{eq:s-minimax}
\end{align}
Thus
\begin{equation}
    \Delta(\mathbf A)=\min_{x\in\{\pm1\}^m}\min_{w\in\mathcal P_m}
    \lambda_{\max}\!\left(\sum_i w_i x_iA_i\right).
\end{equation}
The upper-bound half of Theorem~1 can be viewed as an explicit optimal dual certificate $w$ for the transitive tournament.  More generally, Eq.~\eqref{eq:s-minimax} gives an equivalent spectral optimization problem and may permit operator-family-specific bounds sharper than the general QRAC bound.  As a numerical spot check, direct minimization of Eq.~\eqref{eq:s-minimax} for the transitive targets gives $0.41421357$ for $n=2$ and $0.26794964$ for $n=3$.  These agree with $\tan\!\left(\frac{\pi}{8}\right)=0.41421356$ and $\tan\!\left(\frac{\pi}{12}\right)=0.26794919$, respectively, to absolute error below $5\times10^{-7}$.

\section{State-space margin versus simultaneous measurement}
\label{supp:joint}
Tournament matrices also enter the incompatibility analysis of the same complete quadratic-Majorana observable family~\cite{McNulty2025FermionicJoint}.  Despite the common tournament matrix, the two problems depend on different spectral functionals.

For a tournament matrix $T$, let $\nu_1(T),\ldots,\nu_n(T)\ge0$ denote the $n$ nonnegative block magnitudes in its standard skew-normal form.  The $2n$ singular values occur in repeated pairs.  Feasibility of the covariance ray $\Gamma=\tau T$, with $\tau\ge0$, is equivalent to
\begin{equation}
    \tau\,\nu_{\max}(T)\le1,
    \qquad \nu_{\max}(T)=\max_j \nu_j(T).
\end{equation}
Thus the ray guarantees common margin $1/\nu_{\max}(T)$, and the universal worst target is selected by maximizing the \emph{largest} canonical singular value.  The transitive tournament uniquely maximizes this quantity and gives $\nu_{\max}=\cot\!\left(\frac{\pi}{4n}\right)$~\cite{Deng2018Tournament}.

By contrast, the quadratic Hamiltonian $H_T=i\sum_{a<b}T_{ab}\gamma_a\gamma_b$ has operator norm, by Eq.~\eqref{eq:s-quadratic-spectrum},
\begin{equation}
    \|H_T\|_\infty=\sum_{j=1}^{n}\nu_j(T).
\end{equation}
The incompatibility robustness in Ref.~\cite{McNulty2025FermionicJoint} is controlled by the \emph{sum} of the canonical singular values, equivalently half the skew energy, rather than by $\nu_{\max}$.  Deng \emph{et al.} show that the transitive tournament minimizes this skew energy while maximizing the spectral radius~\cite{Deng2018Tournament}.  Thus the margin-worst class is not the incompatibility-worst class.  The present result limits robust realization of expectation value signs, whereas the joint-measurement result constrains simultaneous measurability.

\end{document}